\newcolumntype{Y}{>{\centering\arraybackslash}X}
\newlength{\verticalcompensationlength}
\newcounter{verticalcompensationrows}
\long\def\symbolfootnote[#1]#2{\begingroup%
\def\thefootnote{\fnsymbol{footnote}}\footnote[#1]{#2}\endgroup}
\titleformat{\section}{\large\bfseries\uppercase}{\thesection.}{.5em}{}
\titlespacing*{\section}{0pt}{*3}{*2}
\titleformat{\subsection}{\normalfont\bfseries}{\thesubsection.}{.5em}{}
\titlespacing*{\subsection}{0pt}{*3}{*2}
\titleformat{\subsubsection}{\normalfont\bfseries}{\thesubsubsection.}{.5em}{}
\titlespacing*{\subsubsection}{0pt}{*3}{*2}
\numberwithin{equation}{section}
\renewcommand{\Pr}{\mathbb{P}} 
\DeclareMathOperator{\EV}{\mathbb{E}} 
\DeclareMathOperator{\Var}{Var}
\DeclareMathOperator{\LR}{\Lambda}
\DeclareMathOperator{\ARL}{ARL}
\DeclareMathOperator{\PFA}{PFA}
\newcommand{\T}{T}
\renewcommand{\le}{\leqslant} 
\renewcommand{\ge}{\geqslant}
\newcommand{\abs}[1]{\left\vert#1\right\vert}
\DeclareMathOperator{\One}{\mathchoice{\rm 1\mskip-4.2mu l}{\rm 1\mskip-4.2mu l}{\rm 1\mskip-4.6mu l}{\rm 1\mskip-5.2mu l}}
\newcommand{\indicator}[1]{{\One_{\left\{#1\right\}}}}
\theoremstyle{plain} 
\newtheorem{theorem}{Theorem}[section]
\newtheorem{lemma}{Lemma}[section]
\theoremstyle{plain}
\newtheorem{remark}{Remark}[section]
\begin{document}

\title{\textbf{\Large An Accurate Method for Determining the Pre-Change Run-Length Distribution of the Generalized Shiryaev--Roberts Detection Procedure}}

\date{}
\author{}
\maketitle

\begin{center}
\null\vskip-2cm\author{
\textbf{\large Aleksey\ S.\ Polunchenko}\\
Department of Mathematical Sciences, State University of New York at Binghamton,\\Binghamton, New York, USA
\vskip0.2cm
\textbf{\large Grigory\ Sokolov}\\
Department of Mathematics, University of Southern California, Los Angeles, California, USA
\vskip0.2cm
\textbf{\large Wenyu\ Du}\\
Department of Mathematical Sciences, State University of New York at Binghamton,\\Binghamton, New York, USA
}
\end{center}
%
%
%
%
\symbolfootnote[0]{\normalsize\hspace{-0.6cm}Address correspondence to A.\ S.\ Polunchenko, Department of Mathematical Sciences, State University of New York (SUNY) at Binghamton, Binghamton, NY 13902--6000, USA; Tel: +1 (607) 777-6906; Fax: +1 (607) 777-2450; E-mail:~\href{mailto:aleksey@binghamton.edu}{aleksey@binghamton.edu}.}\\
%
%
{\small\noindent\textbf{Abstract:} Change-of-measure is a powerful technique in wide use across statistics, probability and analysis. Particularly known as Wald's likelihood ratio identity, the technique enabled the proof of a number of exact and asymptotic optimality results pertaining to the problem of quickest change-point detection. Within the latter problem's context we apply the technique to develop a numerical method to compute the Generalized Shiryaev--Roberts (GSR) detection procedure's pre-change Run-Length distribution. Specifically, the method is based on the integral-equations approach and uses the collocation framework with the basis functions chosen so as to exploit a certain change-of-measure identity and a specific martingale property of the GSR procedure's detection statistic. As a result, the method's accuracy and robustness improve substantially, even though the method's theoretical rate of convergence is shown to be merely quadratic. A tight upper bound on the method's error is supplied as well. The method is not restricted to a particular data distribution or to a specific value of the GSR detection statistic's ``headstart''. To conclude, we offer a case study to demonstrate the proposed method at work, drawing particular attention to the method's accuracy and its robustness with respect to three factors:\begin{inparaenum}[\itshape a)]\item partition size (rough vs. fine), \item change magnitude (faint vs. contrast), and \item Average Run Length (ARL) to false alarm level (low vs. high)\end{inparaenum}. Specifically, assuming independent standard Gaussian observations undergoing a surge in the mean, we employ the method to study the GSR procedure's Run-Length's pre-change distribution, its average (i.e., the usual ARL to false alarm) and standard deviation. As expected from the theoretical analysis, the method's high accuracy and robustness with respect to the foregoing three factors are confirmed experimentally. We also comment on extending the method to handle other performance measures and other procedures.
}
\\ \\
%
%
{\small\noindent\textbf{Keywords:} Fredholm integral equations of the second kind; Numerical analysis; Sequential analysis; Sequential change-point detection; Shiryaev--Roberts procedure; Shiryaev--Roberts--$r$ procedure.}
\\ \\
%
%
%
%
%
{\small\noindent\textbf{Subject Classifications:} 62L10; 62L15; 62P30; 65R20.}

\section{Introduction}
\label{sec:intro}

Sequential (quickest) change-point detection is concerned with the design and analysis of statistical procedures for rapid ``on-the-go'' detection of possible spontaneous changes in the characteristics of a ``live'' monitored (random) process. Specifically, the process is assumed to be monitored continuously through sequentially made observations, and if there is an indication of a possible change in the process' characteristics, one is to detect it as soon as possible, subject to a tolerable level of the risk of false detection. This type of statistical process control is used, e.g., in industrial quality and process control (see, e.g.,~\citealp{Ryan:Book2011,Montgomery:Book2012,Kenett+Zacks:Book1998}), biostatistics, economics, seismology (see, e.g.,~\citealp[Section~11.1.2]{Basseville+Nikiforov:Book93}), forensics, navigation (see, e.g.,~\citealp[Section~11.1.1]{Basseville+Nikiforov:Book93}), cybersecurity (see, e.g.,~\citealp{Tartakovsky+etal:SM2006-discussion,Tartakovsky+etal:JSM2005,Tartakovsy+etal:IEEE-JSTSP2013,Polunchenko+etal:SA2012}), and communication systems. A sequential change-point detection procedure, a rule whereby one is to stop and declare that (apparently) a change is in effect, is defined as a stopping time, $\T$, that is adapted to the observed data, $\{X_n\}_{n\ge1}$.

It will be assumed in this work that the observations, $\{X_n\}_{n\ge1}$, are independent throughout the entire period of surveillance, and such that $X_1,\ldots,X_{\nu}$ are distributed according to a common density $f(x)$, while $X_{\nu+1},X_{\nu+2},\ldots$ each follow a common density $g(x)$; both $f(x)$ and $g(x)$ are fully specified, and $g(x)\not\equiv f(x)$. The serial number of the last $f(x)$-distributed observation, i.e., $\nu\ge0$, is the change-point, and we will regard it as unknown (but not random); the notation $\nu=0$ ($\nu=\infty$) is to be understood as the case when the density of $X_n$ is $g(x)$ ($f(x)$) for all $n\ge1$. This is schematically illustrated in Figure~\ref{fig:basic-iid-change-point}.
\begin{figure}[!htb]
    \centering
    \includegraphics[width=0.8\textwidth]{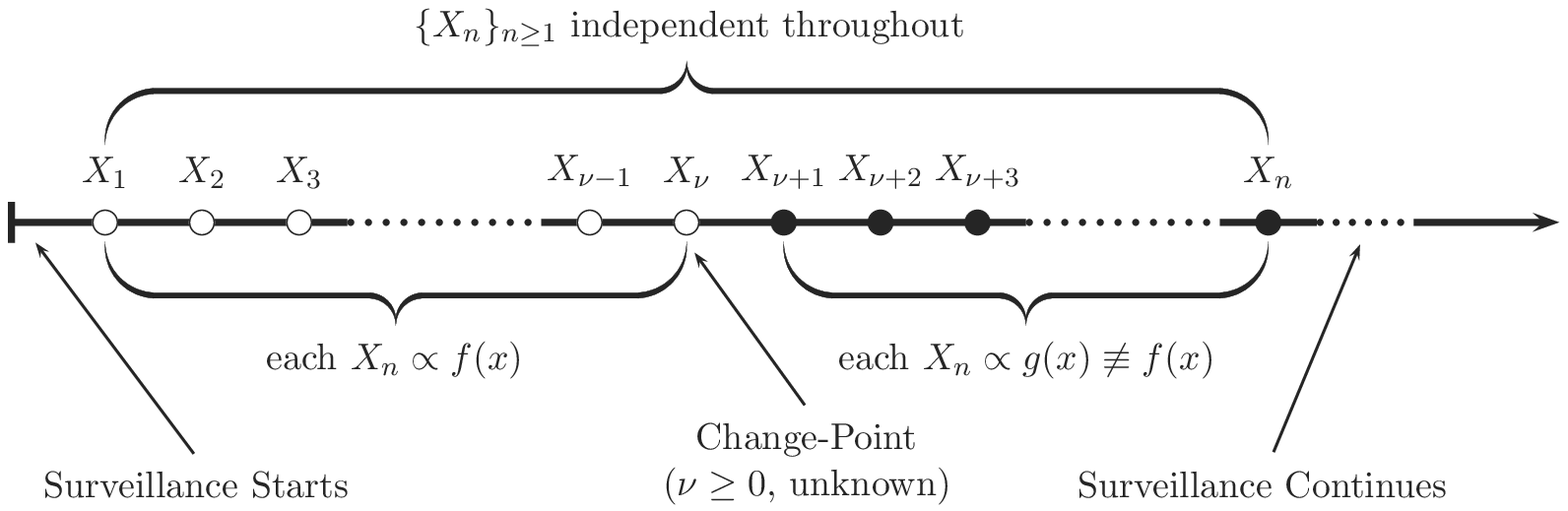}
    \caption{Observations model.}
    \label{fig:basic-iid-change-point}
\end{figure}

At this point, by appropriately choosing the detection delay loss function (which is to be minimized) the described observations model can be used to formulate the corresponding change-point detection problem under at least three different criteria: minimax, generalized (or quasi-) Bayesian and multi-cyclic. For a recent survey, see, e.g.,~\cite{Tartakovsky+Moustakides:SA10},~\cite{Polunchenko+Tartakovsky:MCAP2012} or~\cite{Polunchenko+etal:JSM2013}. However, the focus of this work is on the pre-change regime only. More specifically, the bulk of this paper is devoted to determining the pre-change performance of two related detection procedures: the original Shiryaev--Roberts (SR) procedure (due to the independent work of~\citealp{Shiryaev:SMD61,Shiryaev:TPA63} and that of~\citealp{Roberts:T66}) and its generalization -- the Shiryaev--Roberts--$r$ (SR--$r$) procedure introduced by~\cite{Moustakides+etal:SS11}. As the former procedure is a special case of the latter, we will refer to the SR--$r$ procedure as the Generalized SR (GSR) procedure, in analogy to the terminology used by~\cite{Tartakovsky+etal:TPA2012}.

It is a fact that there isn't much literature on numerical methodology to evaluate the GSR procedure's performance, even though the procedure is nearly minimax optimal in the sense of~\cite{Pollak+Siegmund:AS1975} and~\cite{Pollak:AS85}, and exactly optimal in Shiryaev's~\citeyearpar{Shiryaev:SMD61,Shiryaev:TPA63} multi-cyclic setup. See~\cite{Tartakovsky+etal:TPA2012,Pollak+Tartakovsky:SS09} and~\cite{Polunchenko+Tartakovsky:AS10} for the relevant results. A fair amount of research has been done on the Girschick--Rubin~\citeyearpar{Girschick+Rubin:AMS52} procedure, a ``precursor'' of the original SR procedure; see, e.g.,~\cite{Knoth:FSQC2006,Yashchin:ISR1993}. The original SR procedure was studied, e.g., by~\cite[Section~11.6]{Kenett+Zacks:Book1998}. Also, a comparative analysis of the original SR procedure against various ``mainstream'' charts was carried out, e.g., by~\cite{Mahmoud+etal:JAS2008,Mevorach+Pollak:AJMMS91,Moustakides+etal:CommStat09}. Further headway was recently made by~\cite{Moustakides+etal:CommStat09,Moustakides+etal:SS11,Tartakovsky+etal:IWSM2009}. They offered an efficient numerical method to compute a wide range of operating characteristics of not only the GSR procedure, but of a much larger family of procedures, spanning in particular the popular Cumulative Sum (CUSUM) chart due to~\cite{Page:B54} and the Exponentially Weighted Moving Average (EWMA) scheme proposed by~\cite{Roberts:T59}. However, the question of the method's accuracy was only partially answered, with no error bounds or convergence rates supplied. This agrees with the overall trend in the literature concerning the computational aspect of change-point detection: to deal with the accuracy question in an {\it ad hoc} manner, if at all. This work's aim is twofold: First, we build on to the previous work of~\cite{Moustakides+etal:CommStat09,Moustakides+etal:SS11,Tartakovsky+etal:IWSM2009} and develop a more accurate and robust numerical method to study the GSR procedure's Run-Length distribution. We target the pre-change regime only, and focus on the entire pre-change distribution of the corresponding stopping time and on its first moment (i.e., the usual Average Run Length to false alarm) and standard deviation. The method is based on the integral-equations approach and uses the collocation framework as the principal idea. The basis functions are selected to be linear ``hat'' functions so as to make use of a certain change-of-measure identity and also exploit a specific martingale property of the GSR procedure's detection statistic. Second, the design and analysis of the method are complete in that the question of the method's accuracy is properly addressed, providing a tight error bound and convergence rate. Due to the use of a change-of-measure identity in combination with a certain martingale property of the GSR procedure the method's accuracy is rather high even if the partition is rough, be it because the number of partition nodes is too low or because the partitioned interval is too wide; the latter is an issue when the procedure's detection threshold is high, i.e., in the asymptotic case. Furthermore, the method's accuracy is also robust with respect to the magnitude of the change (faint, moderate or contrast).

The rest of the paper is organized thus. We start with formally stating the problem and describing the GSR procedure in Section~\ref{sec:GSR+properties}. The numerical method and its accuracy analysis are presented in Section~\ref{sec:GSR-RL-evaluation}. In Section~\ref{sec:case-study} we put the numerical method to the test by considering a case study. The study is based on the problem of detecting a shift in the mean of a sequence of independent Gaussian random variables. For this scenario, the performance of the original SR procedure and its extensions has already been quantified and compared against that of the CUSUM chart; see, e.g.,~\cite{Moustakides+etal:CommStat09,Tartakovsky+etal:IWSM2009,Mahmoud+etal:JAS2008,Knoth:FSQC2006,Mevorach+Pollak:AJMMS91}. We tailor the case study to assessing the accuracy of the numerical method, and to computing the GSR procedure's actual Run-Length distribution and its characteristics. To the best of our knowledge, no such work has previously been done. In Section~\ref{sec:further-discussion} we comment on extending the proposed method to other performance measures as well as to other detection procedures. Finally, in Section~\ref{sec:conclusion} we draw conclusions.

\section{The Generalized Shiryaev--Roberts procedure}
\label{sec:GSR+properties}

We first set down some notation. Let $\T$ be a generic stopping time. Let $\Pr_\nu$ and $\EV_\nu$ be, respectively, the probability measure and the corresponding expectation given a known $0\le\nu\le\infty$. Particularly, $\Pr_\infty$ and $\EV_\infty$ denote, respectively, the probability measure and the corresponding expectation assuming the observations' distribution never changed (i.e., $\nu=\infty$). Likewise, $\Pr_0$ and $\EV_0$ are assuming the distribution had changed before the first observation was made (i.e., $\nu=0$).

One of the objects of interest in this work is the Average Run Length (ARL) to false alarm, defined as $\ARL(\T)\triangleq\EV_\infty[\T]$. This is the standard minimax measure of the false alarm risk introduced by Lorden's~\citeyearpar{Lorden:AMS71}. However, this is just the first moment of the $\Pr_\infty$-distribution of the corresponding stopping time. As such, it may not be an exhaustive measure of the risk of sounding a false alarm; see, e.g.,~\cite{Lai:IEEE-IT1998} and~\cite{Tartakovsky:IEEE-CDC05}. Therefore, we will also be interested in the entire $\Pr_\infty$-distribution of the corresponding stopping time, and in particular, in its standard deviation, which we will denote as $\sqrt{\Var_\infty[\T]}\triangleq\sqrt{\EV_\infty[\T^2]-(\EV_\infty[\T])^2}$.

To introduce the GSR procedure we first construct the corresponding likelihood ratio (LR). Let $\mathcal{H}_k\colon\nu=k$, where $0\le k<\infty$ and $\mathcal{H}_{\infty}\colon\nu=\infty$ be, respectively, the hypothesis that the change takes place at time moment $\nu=k$, where $0\le k<\infty$, and the alternative hypothesis that no change ever occurs. The joint densities of the sample $\boldsymbol{X}_{1:n}\triangleq(X_1,\ldots,X_n)$, $n\ge1$, under each of these hypotheses are given by
\begin{align*}
p(\boldsymbol{X}_{1:n}|\mathcal{H}_{\infty})
&=
\prod_{j=1}^n f(X_j)
\;\text{and}\;
p(\boldsymbol{X}_{1:n}|\mathcal{H}_k)
=
\prod_{j=1}^k f(X_j)\prod_{j=k+1}^n
g(X_j)\;\text{for $k<n$},
\end{align*}
assuming $p(\boldsymbol{X}_{1:n}|\mathcal{H}_{\infty})=p(\boldsymbol{X}_{1:n}|\mathcal{H}_k)$ for $k\ge n$. Hence, the corresponding LR is
\begin{align*}
\LR_{1:n,\nu=k}
&\triangleq
\frac{p(\boldsymbol{X}_{1:n}|\mathcal{H}_k)}{p(\boldsymbol{X}_{1:n}|\mathcal{H}_{\infty})}
=\prod_{j=k+1}^n\LR_j,\;\text{for $k<n$},
\end{align*}
where $\LR_n\triangleq g(X_n)/f(X_n)$ is the LR for the $n$-th observation, $X_n$.

We now make an observation that will play an important role in the sequel. Let $P_d^{\LR}(t)\triangleq\Pr_d(\LR_1\le t)$, $t\ge0$, $d=\{0,\infty\}$, denote the cumulative distribution function (cdf) of the LR under measure $\Pr_d$, $d=\{0,\infty\}$, respectively. Since the LR is the Radon--Nikod\'{y}m derivative of measure $\Pr_0$ with respect to measure $\Pr_\infty$ (the two measures are assumed to be mutually absolutely continuous), one can conclude that $dP_0^{\LR}(t)=tdP_\infty^{\LR}(t)$. To be more specific, this can be seen from the following argument:
\begin{align*}
dP_0^{\LR}(t)
&\triangleq d\Pr_0(\LR_1\le t)\\
&=
d\Pr_0(X_1\le \LR_1^{-1}(t))\\
&=
\LR(\LR^{-1}(t))\,d\Pr_\infty(X_1\le \LR_1^{-1}(t))\\
&=
td\Pr_\infty(\LR_1\le t)\\
&=
tdP_\infty^{\LR}(t).
\end{align*}
We will rely on this change-of-measure identity later in Section~\ref{sec:GSR-RL-evaluation} to design our numerical method: the identity and a martingale property of the GSR procedure discussed below will be key to improve the method's accuracy and stability.

\begin{remark}
The above change-of-measure identity can also be derived from Wald's~\citeyearpar{Wald:Book47} likelihood ratio identity; see also, e.g.,~\cite[p.~13]{Siegmund:Book85},~\cite[p.~4]{Woodroofe:Book82}, or~\cite{Lai:SA2004}. Alternatively, since by design $\LR_n\in[0,\infty)$ with probability 1 for all $n\ge1$, and $\EV_\infty[\LR_n]=1$ for all $n\ge1$, one can regard $dP_0^{\LR}(t)=tdP_\infty^{\LR}(t)$ as a size-bias probability-measure transformation; see, e.g.,~\cite{Arratia+Goldstein:arXiv2010}.
\end{remark}

Formally, the SR procedure is defined as the stopping time $\mathcal{S}_A\triangleq\inf\big\{n\ge1\colon R_n\ge A\big\}$, where $A>0$ is a detection threshold that controls the false alarm risk, and
\begin{align}\label{eq:Rn-SR-def}
R_n
&=
\sum_{k=1}^n\LR_{1:n,\nu=k}=\sum_{k=1}^n \prod_{i=k}^n\LR_i,\; n\ge1,
\end{align}
is the SR detection statistic; note the recursion $R_{n}=(1+R_{n-1})\LR_n$ for $n=1,2,\ldots$, where $R_0=0$.

We remark that the detection threshold, $A>0$, can be large, making it unlikely for the detection statistic, $\{R_n\}_{n\ge1}$, to hit (much less to cross) it. As a result, $\{n\ge1\colon R_n\ge A\big\}$ is asymptotically (as $A\to\infty$) an empty set, and therefore the stopping rule, $\mathcal{S}_A$, will never terminate. To account for this, we will assume here and in every definition of a detection procedure to follow that $\inf\{\varnothing\}=\infty$.

Observe that $\{R_n-n\}_{n\ge0}$ is a zero-mean $\Pr_\infty$-martingale, i.e., $\EV_\infty[R_n-n]=0$. From this and the Optional stopping theorem (see, e.g.,~\citealp[Subsection~2.3.2]{Poor+Hadjiliadis:Book08}) one can conclude that $\EV_\infty[R_{\mathcal{S}_A}-\mathcal{S}_A]=0$ so that $\ARL(\mathcal{S}_A)\triangleq\EV_\infty[\mathcal{S}_A]=\EV_\infty[R_{\mathcal{S}_A}]\ge A$. Hence, it is sufficient to set $A=\gamma$ to ensure $\ARL(\mathcal{S}_A)\ge\gamma$ for any desired $\gamma>1$. More precisely,~\cite{Pollak:AS87} established that
\begin{align*}
\ARL(\mathcal{S}_A)
&=
\dfrac{A}{\xi}[1+o(1)],\;\text{as}\;\gamma\to\infty,
\end{align*}
where $\xi\in(0,1)$ is the limiting exponential overshoot. It is model-dependent constant and can be computed using nonlinear renewal theory; see, e.g.,~\cite{Siegmund:Book85} or~\cite{Woodroofe:Book82}. For practical purposes the approximation $\ARL(\mathcal{S}_A)\approx A/\xi$ is known to be extremely accurate, even if the corresponding ARL is in the tens.

We now describe the SR--$r$ procedure. It was introduced by~\cite{Moustakides+etal:SS11} who regarded starting off the original SR procedure at a fixed (but specially designed) $R_0^r=r$, $r\ge0$. This is similar to the idea proposed earlier by~\cite{Lucas+Crosier:T1982} for Page's~\citeyearpar{Page:B54} CUSUM chart. However, as shown by~\cite{Moustakides+etal:SS11} and~\cite{Tartakovsky+etal:TPA2012}, giving the SR procedure a headstart is practically putting it on steroids: the gain in performance far exceeds that observed by~\cite{Lucas+Crosier:T1982} for the CUSUM chart.

Specifically, the SR--$r$ procedure is defined as the stopping time
\begin{align}\label{eq:T-SRr-def}
\mathcal{S}_{A}^r
&=
\inf\{n\ge1\colon R_n^r \ge A\} ,\; A >0,
\end{align}
where
\begin{align}\label{eq:statistic-SRr-def}
R_{n+1}^r
&=
(1+R_{n}^r)\LR_{n+1}\;\;\text{for}\;\; n=1,2,\ldots,\;\;\text{where}\;\;R_0^r=r,
\end{align}
and we remark that $\{R_n^r-n-r\}_{n\ge0}$ is a zero-mean $\Pr_\infty$-martingale, i.e., $\EV_\infty[R_n^r-n-r]=0$. As a result, one can establish that
\begin{align*}
\ARL(\mathcal{S}_A^r)
&\approx
\dfrac{A}{\xi}-r,\;\text{as}\;\gamma\to\infty,
\end{align*}
where $\xi\in(0,1)$ is again the limiting exponential overshoot. This approximation is quite accurate as well, even if the ARL is small.

We now make a comment on the foregoing approximation for $\ARL(\mathcal{S}_A^r)$. For a fixed $A>0$ it may seem as though $\ARL(\mathcal{S}_A^r)$ is a linear function of $r$ with a slope of $-1$ and a constant $y$-intercept (proportional to $A$). However, this is true only if the headstart is not too large. For otherwise $\ARL(\mathcal{S}_A^r)$ would have zero or negative values, which is impossible since by definition $\ARL(\mathcal{S}_A^r)\ge1$ for all $r,A>0$. If $r$ is too large, then $\ARL(\mathcal{S}_A^r)$ is close to $1$, and in fact $\lim_{r\to\infty}\ARL(\mathcal{S}_A^r)=1$. This is illustrated in Figure~\ref{fig:ARL_vs_r}.
\begin{figure}[!htb]
    \centering
    \includegraphics[width=0.9\textwidth]{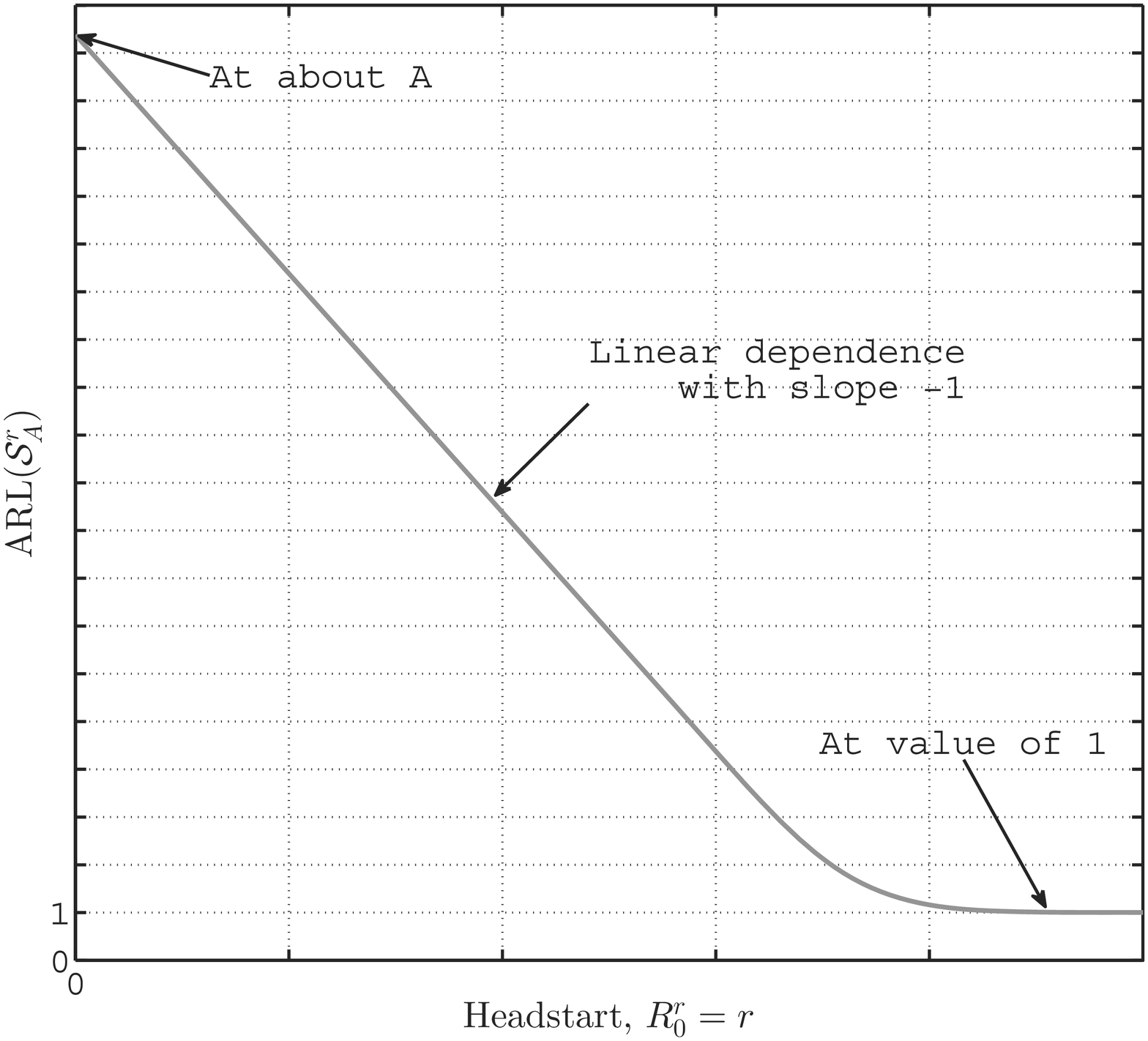}
    \caption{Typical behavior of $\ARL(\mathcal{S}_A^r)$ as a function of $r\ge0$ for a fixed $A>0$.}
    \label{fig:ARL_vs_r}
\end{figure}
Since $\ARL(\mathcal{S}_A^{r=0})\approx A/\xi>A$ and as a function of $r$ it's a line with a slope of $-1$, the value of the headstart at which $\ARL(\mathcal{S}_A^{r=0})$ starts to curl is also about $r_{\mathrm{crit}}\approx A/\xi$. Therefore, roughly speaking for $r\in[0,A/\xi]$, the behavior of $\ARL(\mathcal{S}_A^{r})$ as a function of $r$ is almost linear with a slope of $-1$. This is a direct consequence of the GSR procedure's martingale property mentioned above, and it will be used to improve the accuracy of our numerical method.

We now make a comment on the foregoing approximation for $\ARL(\mathcal{S}_A^r)$. For a fixed $A>0$ it may seem as though $\ARL(\mathcal{S}_A^r)$ is a linear function of $r$ with a slope of $-1$ and a constant $y$-intercept (linearly proportional to $A$). However, this is true only if the headstart is not too large, in which case $\ARL(\mathcal{S}_A^r)$ is close to $1$, and in fact $\lim_{r\to\infty}\ARL(\mathcal{S}_A^r)=1$. This is illustrated in Figure~\ref{fig:ARL_vs_r}.
Since $\ARL(\mathcal{S}_A^{r=0})\approx A/\xi>A$, the value of the headstart at which $\ARL(\mathcal{S}_A^{r=0})$ starts to curl is also about $r_{\mathrm{crit}}\approx A/\xi$. Therefore, roughly speaking, for $r\in[0,A/\xi]$ the behavior of $\ARL(\mathcal{S}_A^{r})$ as a function of $r$ is almost linear. This is a direct consequence of the GSR procedure's martingale property mentioned above, and it will be used to improve the accuracy of our numerical method.

We have now introduced all of the procedures of interest. Next, we present a numerical method to study their operating characteristics.

\section{Run-Length distribution}
\label{sec:GSR-RL-evaluation}

We develop here a numerical method to study the Run-Length distribution of the GSR procedure; the corresponding stopping time, $\mathcal{S}_A^r$, is given by~\eqref{eq:T-SRr-def} and~\eqref{eq:statistic-SRr-def} above. Specifically, the focus is on the ``in-control'' distribution and its characteristics. The distribution itself is given by $\Pr_\infty(\mathcal{S}_A^r=k)$, $k\ge0$, i.e., the probability mass function (pmf) of the GSR stopping time; we will also consider $\Pr_\infty(\mathcal{S}_A^r>k)$, $k\ge0$, i.e., the corresponding ``survival function''. The distribution's characteristics of interest are\begin{inparaenum}[\itshape a)]\item the first moment (under the $\Pr_\infty$ measure), i.e., the ARL to false alarm defined as $\ARL(\mathcal{S}_A^r)\triangleq\EV_\infty[\mathcal{S}_A^r]$ and \item the distribution's higher moments (all under the $\Pr_\infty$ measure)\end{inparaenum}. The proposed numerical method is a build-up over one previously proposed and applied by~\cite{Tartakovsky+etal:IWSM2009,Moustakides+etal:CommStat09,Moustakides+etal:SS11}.

\subsection{Integral Equations Framework}
\label{ssec:integral-equations-framework}

We begin with notation and assumptions. First recall $\LR_n\triangleq g(X_n)/f(X_n)$, i.e., the ``instantaneous'' LR for the $n$-th data point, $X_n$, and note that $\{\LR_n\}_{n\ge1}$ are iid under both measures $\Pr_d$, $d=\{0,\infty\}$. For simplicity, $\LR_1$ will be assumed absolutely continuous, although at an additional effort the case of non-arithmetic $\LR_1$ can be handled as well. Let $P_d^{\LR}(t)\triangleq\Pr_d(\LR_1\le t)$, $t\ge0$, be the cdf of the LR under the measure $\Pr_d$, $d=\{0,\infty\}$. Also, denote
\begin{align}\label{eq:K-def}
K_d(x,y)
&\triangleq
\frac{\partial}{\partial y}\Pr_d(R_{n+1}^x\le y|R_n^x=x)
=
\frac{\partial}{\partial y}P_d^{\LR}\left(\frac{y}{1+x}\right),
\; d=\{0,\infty\},
\end{align}
the {\em transition probability density kernel} for the (stationary) Markov process $\{R_n^x\}_{n\ge0}$.

We now note that from~\eqref{eq:K-def} and the change-of-measure identity $d{P}_0^{\LR}(t)=t d{P}_{\infty}^{\LR}(t)$, $t\ge0$ established earlier, one can readily deduce that $(1+x){K}_0(x,y)=y{K}_{\infty}(x,y)$. This can be used, e.g., as a ``shortcut'' in deriving the formula for ${K}_0(x,y)$ from that for ${K}_{\infty}(x,y)$, or the other way around -- whichever one of the two is found first. More importantly, this connection between ${K}_0(x,y)$ and ${K}_{\infty}(x,y)$ will allow our numerical method to have greater accuracy and robustness.

We now state the first equation of interest. Let $R_0^x=x\ge-1$ be fixed. For notational brevity, from now on let $\ell(x,A)\triangleq\ARL(\mathcal{S}_A^x)\triangleq\EV_{\infty}[\mathcal{S}_A^x]$; we reiterate that this expectation is conditional on $R_0^x=x$. Using the fact that $\{R_n^x\}_{n\ge0}$ is Markovian, it can be shown that $\ell(x,A)$ satisfies the equation
\begin{align}\label{eq:ARL-int-eqn}
\ell(x,A)
&=
1+\int_0^AK_{\infty}(x,y)\,\ell(y,A)\,dy;
\end{align}
cf.~\cite{Moustakides+etal:SS11}. It is also worth noting here that so long as $R_0^x=x\ge-1$, one can infer from~\eqref{eq:Rn-SR-def} that $R_n^x\ge0$ with probability 1 for all $n\ge1$ under either measure $\Pr_d$, $d=\{0,\infty\}$. This is why the lower limit of integration in the integral in the right-hand side of~\eqref{eq:ARL-int-eqn} is 0. Furthermore, due to the GSR procedure's martingale property discussed earlier, we have $\ell(x,A)\approx A/\xi-x$ at least for $x\le A/\xi$. Any good numerical method to solve the equation for $\ell(x,A)$ is to factor in the expected shape of the sought function.

Next, introduce $\rho_k(x,A)\triangleq\Pr_\infty(\mathcal{S}_A^x>k)$ for a fixed $k\ge0$; note that $\rho_0(x,A)\equiv1$ for all $x,A\in\mathbb{R}$, and in general $0\le\rho_k(x,A)\le1$ for all $x,A\in\mathbb{R}$ and $k\ge0$. We will occasionally refer to $\rho_k(x,A)$ as the ``survival function''. Again, since $\{R_n^x\}_{n\ge0}$ is Markovian, one can establish the recurrence
\begin{align}\label{eq:recursion2}
\rho_{k+1}(x,A)
&=
\int_0^A K_{\infty}(x,y)\,\rho_k(y,A)\,dy,
\end{align}
wherewith one can generate the functional sequence $\{\rho_k(x,A)\}_{k\ge0}$; cf.~\cite{Moustakides+etal:SS11}. Note that this recurrence is repetitive application of the linear integral operator
\begin{align*}
\mathcal{K}_{\infty}\circ u
&=
\int_0^A K_{\infty}(x,y)\,u(y)\,dy,
\end{align*}
where $u(x)$ is assumed to be sufficiently smooth inside the interval $[0,A]$. Temporarily deferring formal discussion of this operator's properties, note that using this operator notation, recurrence~\eqref{eq:recursion2} can be rewritten as $\rho_{k+1}=\mathcal{K}_{\infty}\circ \rho_{k}$, $k\ge0$, or equivalently, as $\rho_{k}=\mathcal{K}_{\infty}^{k}\circ \rho_{0}$, $k\ge0$, where
\begin{align*}
\mathcal{K}_{\infty}^{k}\circ u
&\triangleq
\underbrace{\mathcal{K}_{\infty}\circ\cdots\circ\mathcal{K}_{\infty}}_{\text{$k$ times}}\circ\,u\;\text{for}\;k\ge1,
\end{align*}
and $\mathcal{K}_{\infty}^{0}$ is the identity operator from now on denoted as $\mathbb{I}$, i.e., $\mathcal{K}_{\infty}^{0}\circ u=\mathbb{I}\circ u=u$. Similarly, in the operator form, equation~\eqref{eq:ARL-int-eqn} can be rewritten as $\ell=1+\mathcal{K}_\infty\circ\ell$.

We note that the sequence $\{\rho_k(x,A)\}_{k\ge0}$ can be used to derive many characteristics of the GSR stopping time. For example, since
\begin{align}\label{eq:ARL-Neumann-series}
\ell
&=
\sum_{k\ge0}\rho_k=\sum_{k\ge0}\mathcal{K}_\infty^k\circ\rho_0
=
\left(\,\sum_{k\ge0}\mathcal{K}_\infty^k\right)\circ\rho_0
=
(\,\mathbb{I}-\mathcal{K}_\infty)^{-1}\circ\rho_0,
\end{align}
one obtains equation~\eqref{eq:ARL-int-eqn} recalling that $\rho_0(x,A)=1$ for all $x,A\in\mathbb{R}$. The implicit use of the geometric series convergence theorem to reduce the last infinite sum to the operator $(\,\mathbb{I}-\mathcal{K}_\infty)^{-1}$ is justified by the fact that the spectral radius of the operator $\mathcal{K}_\infty$ is strictly less than 1. This is shown, e.g.,~\cite{Moustakides+etal:SS11}.

More generally, through the sequence $\{\rho_k(x,A)\}_{k\ge0}$, one can derive an equation for the characteristic function of the GSR stopping time, and then use it to obtain an equation for any moment of the GSR stopping time (under the $\Pr_\infty$ measure). Specifically, denote $\Psi(s,x,A)\triangleq\EV_\infty[e^{\imath s\mathcal{S}_A^x}]$, where hereafter $\imath=\sqrt{-1}$. Similarly to~\eqref{eq:ARL-Neumann-series}, it can be shown that $\Psi(s,x,A)$ solves the equation
\begin{align}\label{eq:GSR-char-fcn-eqn}
\Psi(s,x,A)
&=
e^{\imath s}[1-\rho_1(x,A)]
+e^{\imath s}\int_0^A K_\infty(x,y)\,\Psi(s,y,A)\,dy,
\end{align}
where $\rho_1(x,A)$ is as in~\eqref{eq:recursion2} above, and $\Psi(0,x,A)\equiv1$ for all $x,A\in\mathbb{R}$.

To illustrate the power of equation~\eqref{eq:GSR-char-fcn-eqn}, consider the second moment of the GSR stopping time (under the $\Pr_\infty$ measure). Let $\mu_2(x,A)\triangleq\EV_\infty[(\mathcal{S}_A^x)^2]$, and note that $\mu_2(x,A)=-\partial^2 \Psi(s,x,A)/\partial s^2$ evaluated at $s=0$. Hence, from differentiating equation~\eqref{eq:GSR-char-fcn-eqn} twice with respect to $s$, and then setting $s=0$, we obtain
\begin{align}\label{eq:mu2-int-eqn}
\mu_2(x,A)
&=
2\ell(x,A)-1
+\int_0^A K_\infty(x,y)\,\mu_2(y,A)\,dy,
\end{align}
where $\ell(x,A)$ is as in~\eqref{eq:ARL-int-eqn}. Respective equations for the third and higher moments of the GSR stopping time (under the $\Pr_\infty$ measure), i.e., $\mu_n(x,A)\triangleq\EV_\infty[(\mathcal{S}_A^x)^n]$, $n\ge3$, can be obtained in an analogous manner by evaluating $\mu_n(x,A)=(-\imath)^n\partial^n \Psi(s,x,A)/\partial s^n$ at $s=0$.

We remark parenthetically that equation~\eqref{eq:GSR-char-fcn-eqn} resembles that for the characteristic function of the geometric distribution (over the set $\{1,2,\ldots\}$). One may therefore deduce that the $\Pr_\infty$-distribution of the GSR stopping time, $\mathcal{S}_A^x$, is geometric. This is, in fact, true, though generally only in the limit, as $A\to\infty$, and is a special case of the more general result of~\cite{Pollak+Tartakovsky:TVP09,Tartakovsky+etal:IWAP08}. Under certain mild conditions, they showed that the first exit time of a nonnegative recurrent monotone Markov process from the strip $[0,B]$, $B>0$, is asymptotically (as $B\to\infty$) geometrically distributed; the ``probability of success'' is the reciprocal of the first exit time's expectation.

The series $\{\rho_k(x,A)\}_{k\ge0}$ can also be used to obtain the $\Pr_\infty$-pmf of the GSR's stopping time. To this end, it suffices to note that $\Pr_\infty(\mathcal{S}_A^x=k)=\Pr_\infty(\mathcal{S}_A^x>k-1)-\Pr_\infty(\mathcal{S}_A^x>k)=\rho_{k-1}(x,A)-\rho_{k}(x,A)$, $k\ge1$, whence
\begin{align*}
\Pr_\infty(\mathcal{S}_A^x=k)
&=
(\mathbb{I}-\mathcal{K}_\infty)\circ\rho_{k-1}
=
(\mathbb{I}-\mathcal{K}_\infty)\circ\mathcal{K}_\infty^{k-1}\circ\rho_0
=
\mathcal{K}_\infty^{k-1}\circ(\mathbb{I}-\mathcal{K}_\infty)\circ\rho_0,\;k\ge1,
\end{align*}
since the operators $\mathcal{K}_\infty$ and $\mathbb{I}-\mathcal{K}_\infty$ commute with one another.

We have now obtained a complete set of integral equations and relations to compute any of the desired performance characteristics of the GSR stopping time. The question to be considered next is that of computing these characteristics in practice.

\subsection{The Numerical Method and Its Accuracy}
\label{ssec:numerical-method+accuracy}

We now turn attention to the question of solving the equations presented in the preceding subsection. To this end, observe first that equations~\eqref{eq:ARL-int-eqn} and~\eqref{eq:mu2-int-eqn} are both renewal-type equations of the general form:
\begin{align*}
u(x)
&=
\upsilon(x)+\int_0^A K_\infty(x,y)\,u(y)\,dy,
\end{align*}
where $\upsilon(x)$ is a given (known) function, $K_\infty(x,y)$ is as in~\eqref{eq:K-def}, and $u(x)$ is the unknown to be found; note that while $u(x)$ does depend on the upper limit of integration, $A$, for notational simplicity we will no longer emphasize that, and use $u(x)$ instead of $u(x,A)$.

To see that the above equation is an ``umbrella'' equation for the equations of interest, observe that, e.g., to obtain equation~\eqref{eq:ARL-int-eqn} on the ARL to false alarm, it suffices to set $\upsilon(x)\equiv 1$ for any $x\in\mathbb{R}$. Similarly, choosing $\upsilon(x)=2\ell(x,A)-1$, where $\ell(x,A)$ is as in~\eqref{eq:ARL-int-eqn}, will yield equation~\eqref{eq:mu2-int-eqn}, which governs the second $\Pr_\infty$-moment of the GSR stopping time. Thus, solving the above equation is the same as solving~\eqref{eq:ARL-int-eqn} and~\eqref{eq:mu2-int-eqn}. The problem, however, is that the general equation is a Fredholm integral equation of the second kind, and such equations rarely allow for an analytical solution. Hence, a numerical approach is in order.

We first set the underlying space for the problem. Let $\mathcal{X}=\mathbb{C}[0,A]$ be the space of continuous functions over the interval $[0,A]$. Equip $\mathcal{X}$ with the usual uniform norm. We will assume that ${P}_\infty^{\LR}(t)$ and the unknown function $u(x)$ are both differentiable as far as necessary. Under these assumptions $\mathcal{K}_\infty$ is a bounded linear operator from $\mathcal{X}$ into $\mathcal{X}$, equipped with the usual $\mathrm{L}_\infty$ norm, $\|u\|_\infty\triangleq\max_{x\in[0,A]}|u(x)|$. Specifically,
\begin{align*}
\|\mathcal{K}_\infty\|_\infty
&\triangleq
\sup_{x\in[0,A]}\int_0^A\abs{K_\infty(x,y)}dy < 1;
\end{align*}
cf.~\cite{Moustakides+etal:SS11}. An important implication of this result is that one can apply the Fredholm alternative (see, e.g.,~\citealp[Theorem~2.8.10]{Atkinson+Han:Book09}) and conclude that all of the equations of interest do have a solution and it is unique. The same conclusion can be reached from the Banach fixed-point theorem, proved by Banach in his Ph.D. thesis in 1920 (published in 1922).

We now define a projection space within which our collocation solution, $u(x)$, will then be sought. Under this method $u(x)$ is approximated as
\begin{align*}
u_N(x)
&=
\sum_{j=1}^N u_{j,N}\,\phi_j(x),
\end{align*}
where $\{u_{j,N}\}_{1\le j\le N}$ are constant coefficients to be determined, and $\{\phi_j(x)\}_{1\le j\le N}$ are suitably chosen (known) basis functions.

We now discuss a strategy for determining the coefficients $\{u_{j,N}\}_{1\le j\le N}$. For any fixed set of these coefficients, substitution of $u_N(x)$ into the equation will give a residual $r_N=u_N-\mathcal{K}_{\infty}\circ u_N-\upsilon$. Unless the true solution $u(x)$ itself is a linear combination of the basis functions $\{\phi_j(x)\}_{1\le j\le N}$, no choice of the coefficients $\{u_{j,N}\}_{1\le j\le N}$ will make the residual identically zero uniformly for all $x\in[0,A]$. However, by requiring $r_N(x)$ to be zero at some $\{z_j\}_{1\le j\le N}$, where $z_j\in[0,A]$ for all $j=1,2,\ldots,N$, one can achieve a certain level of proximity of the residual to zero. As a result, one readily arrives at the following system of $N$ algebraic equations on the coefficients $u_{j,N}$
\begin{align}\label{eq:fredholm-lin-system}
\boldsymbol{u}_N
&=
\boldsymbol{\upsilon}
+
\mathcal{\boldsymbol{K}}_\infty \boldsymbol{u}_N,
\end{align}
where $\boldsymbol{u}_N=[u_{1,N},\ldots,u_{N,N}]^\top$, $\boldsymbol{\upsilon}=[\upsilon(z_1),\ldots,\upsilon(z_N)]^\top$, and  $\boldsymbol{K}_\infty$ is a matrix of size $N$-by-$N$ such that
\begin{align}\label{eq:def-matrix-K}
(\boldsymbol{K}_\infty)_{i,j}
&=
\int_0^A K_\infty(z_i,y)\,\phi_j(y)\,dy,\;\;1\le i,j\le N.
\end{align}
The points $\{z_j\}_{1\le j\le N}$ are referred to as the collocation points.

We now remark that for the system of linear equations~\eqref{eq:fredholm-lin-system} to be consistent, the functions $\{\phi_j(x)\}_{1\le j\le N}$ need to be chosen so as to form a basis in the appropriate functional space, i.e., in particular, $\{\phi_j(x)\}_{1\le j\le N}$ need to be linearly independent. This is equivalent to introducing an interpolating projection operator, $\pi_N$, that projects the sought function $u(x)$ onto the span of these basis functions. Specifically, this operator is defined as $\pi_N\circ u=\sum_{j=1}^N u_{j,N} \phi_j(x)$, and
\begin{align*}
\|\pi_N\|_\infty
&=
\max_{0\le x\le A}\sum_{j=1}^N|\phi_j(x)|\ge1.
\end{align*}

It is also worth noting that, by design, this method is most accurate at the collocation points, $\{z_j\}_{1\le j\le N}$, i.e., at the points at which the residual is zero. For an arbitrary point $x$ the unknown function can be evaluated as
\begin{align}\label{eq:int-eqn-iter-sol}
\begin{aligned}
\widetilde{u}_N(x)
&=
\upsilon(x)+\int_0^A K_\infty(x,y)\,u_N(y)\,dy\\
&=
\upsilon(x)+\sum_{j=1}^Nu_{j,N}\int_0^A K_\infty(x,y)\,\phi_j(y)\,dy.
\end{aligned}
\end{align}
This technique is known as the iterated solution; see, e.g.,~\cite[Subsection~12.3.2]{Atkinson+Han:Book09}. Also note that $\widetilde{u}_N(z_j)=u_N(z_j)=u_{j,N}$, $1\le j\le N$.

We now consider the question of accuracy. To that end, it is apparent that the choice of the basis functions, $\{\phi_i(x)\}_{1\le i\le N}$, must play a critical role. This is, in fact, the case, as may be concluded from, e.g.,~\cite[Theorem~12.1.12,~p.~479]{Atkinson+Han:Book09}. Specifically, using $\|u-\widetilde{u}_N\|_\infty$ as a sensible measure of the method's error and applying~\cite[Formula~12.3.21,~p.~499]{Atkinson+Han:Book09}, we obtain
\begin{align}\label{eq:err-bound-1}
\begin{aligned}
\|u-\widetilde{u}_N\|_\infty
&\le
\|(\mathbb{I}-\mathcal{K}_\infty)^{-1}\|_\infty\|\mathcal{K}_\infty\circ(\mathbb{I}-\pi_N)\circ u\|_\infty\\
&\le
\|(\mathbb{I}-\mathcal{K}_\infty)^{-1}\|_\infty\|\mathcal{K}_\infty\|_\infty\|(\mathbb{I}-\pi_N)\circ u\|_\infty,
\end{aligned}
\end{align}
whence one can see that the method's error is determined by two factors:\begin{inparaenum}[\itshape a)]\item $\|(\mathbb{I}-\mathcal{K}_\infty)^{-1}\|_\infty$ and \item $\|(\mathbb{I}-\pi_N)\circ u\|_\infty$, which is the corresponding interpolation error\end{inparaenum}. The latter can be found for each particular choice of $\pi_N$. The bigger problem is to upperbound $\|(\mathbb{I}-\mathcal{K}_\infty)^{-1}\|_\infty$. To this end, the standard result
\begin{align*}
\|(\mathbb{I}-\mathcal{K}_\infty)^{-1}\|_\infty
&\le
\dfrac{1}{1-\|\mathcal{K}_\infty\|_\infty}
\end{align*}
is applicable since $\|\mathcal{K}_\infty\|_\infty<1$. However, it is well-known that this is a very generic result, and as an upper bound is often far off. Consequently, this loosens the upper bound on $\|u-\widetilde{u}_N\|_\infty$ and diminishes its practical potential. However, since in our particular case ${K}_\infty(x,y)$ is a transition probability kernel, a tighter (in fact, exact) bound is possible to obtain. We now state the corresponding result.
\begin{lemma}\label{lem:ARL-inv-bound}
$\|\,(\mathbb{I}-\mathcal{K}_\infty)^{-1}\|_\infty=\|\,\ell\,\|_\infty$.
\end{lemma}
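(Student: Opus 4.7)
The plan is to exploit two features of $\mathcal{K}_\infty$: it is a positive operator (its kernel $K_\infty(x,y)$ is nonnegative), and $\ell$ itself is the image of the constant function $\mathbf{1}$ under $(\mathbb{I}-\mathcal{K}_\infty)^{-1}$. Combining these should let me reduce the operator-norm computation to evaluating $(\mathbb{I}-\mathcal{K}_\infty)^{-1}$ on a single, distinguished function.

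First I would observe that equation~\eqref{eq:ARL-int-eqn}, read in operator form as $\ell = \mathbf{1} + \mathcal{K}_\infty\circ\ell$, rearranges to $\ell = (\mathbb{I}-\mathcal{K}_\infty)^{-1}\circ\mathbf{1}$; invertibility was already established earlier in the excerpt via $\|\mathcal{K}_\infty\|_\infty<1$ and the Neumann series. This immediately gives the trivial direction $\|(\mathbb{I}-\mathcal{K}_\infty)^{-1}\|_\infty \ge \|\ell\|_\infty$, since $\|\mathbf{1}\|_\infty = 1$ and the operator norm is a supremum over unit-ball inputs.

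Next I would establish the reverse inequality by positivity. Because $K_\infty(x,y)\ge0$ for all $x,y\in[0,A]$, each iterate $\mathcal{K}_\infty^k$ maps nonnegative functions to nonnegative functions, and so does the Neumann sum $(\mathbb{I}-\mathcal{K}_\infty)^{-1} = \sum_{k\ge0}\mathcal{K}_\infty^k$. For any $u\in\mathbb{C}[0,A]$ with $\|u\|_\infty\le 1$ one has the pointwise bound $|u|\le\mathbf{1}$, and the triangle inequality applied term-by-term in the Neumann series yields
\begin{align*}
\bigl|(\mathbb{I}-\mathcal{K}_\infty)^{-1}\circ u\,(x)\bigr|
&\le
(\mathbb{I}-\mathcal{K}_\infty)^{-1}\circ |u|\,(x)
\le
(\mathbb{I}-\mathcal{K}_\infty)^{-1}\circ \mathbf{1}\,(x)
=
\ell(x,A).
\end{align*}
Taking the supremum over $x\in[0,A]$ and then over $u$ with $\|u\|_\infty\le 1$ gives $\|(\mathbb{I}-\mathcal{K}_\infty)^{-1}\|_\infty\le\|\ell\|_\infty$, and combining the two inequalities yields the claim.

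There is no real obstacle here beyond being careful about what ``positivity'' means for the inverse operator; the cleanest way is to invoke the Neumann expansion (already justified) so that monotonicity of $(\mathbb{I}-\mathcal{K}_\infty)^{-1}$ on nonnegative functions is transparent. One minor point worth checking is that the intermediate inequality $|(\mathbb{I}-\mathcal{K}_\infty)^{-1}\circ u|\le (\mathbb{I}-\mathcal{K}_\infty)^{-1}\circ|u|$ is legitimate pointwise; this follows because for the $k=0$ term it is an equality, and for $k\ge1$ each $\mathcal{K}_\infty^k$ is an integral operator with nonnegative kernel, so the standard $|\!\int\! Ku\,dy|\le\int\! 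K|u|\,dy$ estimate applies.
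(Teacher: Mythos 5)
Your proof is correct and rests on exactly the mechanism the paper invokes: nonnegativity of the kernel makes $(\mathbb{I}-\mathcal{K}_\infty)^{-1}=\sum_{k\ge0}\mathcal{K}_\infty^k$ a positive operator, so its $\mathrm{L}_\infty$ operator norm is attained at the constant function $\mathbf{1}$, whose image is $\ell$ by equation~\eqref{eq:ARL-Neumann-series}. The paper's proof simply cites the analogous result of \cite{Fan:MfM1958} for matrices with nonnegative entries and asserts that the argument extends to the integral operator; your Neumann-series argument is precisely that extension written out in full, so it is the same approach, just self-contained.
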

\begin{proof}
The desired result can be shown using the argument akin to one given to prove, e.g.,~\cite[Theorem~5, part (b)]{Fan:MfM1958} for square matrices with non-negative elements. Specifically, even though $\mathcal{K}_\infty$ and $(\mathbb{I}-\mathcal{K}_\infty)^{-1}$ are not matrices, Fan's proof for matrices extends easily to $\mathcal{K}_\infty$ and $(\mathbb{I}-\mathcal{K}_\infty)^{-1}$, as either operator is bounded, linear and non-negative-valued, since $K_\infty(x,y)\ge0$ for all $x,y\ge0$.
\end{proof}

This lemma allows to find \emph{exactly} the bound for the proposed method's error. We now state the corresponding result for equation~\eqref{eq:ARL-int-eqn}.
\begin{theorem}\label{thm:error-bound}
Assuming $N\ge1$ is sufficiently large
\begin{align*}
\|\ell-\widetilde{\ell}_N\|_\infty
&<
\|\ell\|_\infty\|\ell_{xx}\|_\infty\frac{h^2}{8},\;\text{where}\; \ell_{xx}\triangleq\dfrac{\partial^2}{\partial x^2}\ell(x,A);
\end{align*}
note that the inequality is strict.
\end{theorem}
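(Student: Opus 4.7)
The plan is to derive the stated bound by specializing the generic estimate~\eqref{eq:err-bound-1} to $u=\ell$ and then bounding each of the three factors on its right-hand side separately. First I would invoke Lemma~\ref{lem:ARL-inv-bound} to rewrite $\|(\mathbb{I}-\mathcal{K}_\infty)^{-1}\|_\infty$ exactly as $\|\ell\|_\infty$, which disposes of the first factor with no slack at all. Next, the middle factor $\|\mathcal{K}_\infty\|_\infty$ is already known to satisfy $\|\mathcal{K}_\infty\|_\infty<1$ strictly; this is precisely the ingredient that will upgrade the final bound from a weak to a strict inequality.

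The only piece that requires genuine work is the interpolation-error factor $\|(\mathbb{I}-\pi_N)\circ\ell\|_\infty$. Since the basis $\{\phi_j\}_{1\le j\le N}$ consists of linear ``hat'' splines tied to the partition $\{z_j\}_{1\le j\le N}$ of $[0,A]$ with mesh $h$, $\pi_N\circ\ell$ is nothing but the standard piecewise linear interpolant of $\ell$, agreeing with $\ell$ at every node. For $N$ sufficiently large this interpolant is well-defined, and on each subinterval $[z_j,z_{j+1}]$ Taylor's theorem (in remainder form for linear interpolation) gives
\begin{equation*}
\ell(x)-(\pi_N\circ\ell)(x)=\tfrac{1}{2}\,\ell_{xx}(\eta_x)\,(x-z_j)(x-z_{j+1})
\end{equation*}
for some $\eta_x\in(z_j,z_{j+1})$. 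Taking absolute values, bounding $|\ell_{xx}(\eta_x)|\le\|\ell_{xx}\|_\infty$, and noting that $|(x-z_j)(x-z_{j+1})|$ is maximized at the subinterval midpoint with value $h^2/4$, I would conclude
\begin{equation*}
\|(\mathbb{I}-\pi_N)\circ\ell\|_\infty\le\frac{h^2}{8}\,\|\ell_{xx}\|_\infty.
\end{equation*}

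Assembling the three estimates inside~\eqref{eq:err-bound-1} then yields
\begin{equation*}
\|\ell-\widetilde{\ell}_N\|_\infty\le\|\ell\|_\infty\cdot\|\mathcal{K}_\infty\|_\infty\cdot\frac{h^2}{8}\,\|\ell_{xx}\|_\infty<\|\ell\|_\infty\,\|\ell_{xx}\|_\infty\,\frac{h^2}{8},
\end{equation*}
with the strict inequality following immediately from $\|\mathcal{K}_\infty\|_\infty<1$. The main obstacle I anticipate is not in this assembly but in justifying the regularity premise underpinning it: one must argue that $\ell(\cdot,A)$ is genuinely twice continuously differentiable on $[0,A]$ so that $\|\ell_{xx}\|_\infty$ is finite and the Taylor expansion above is legitimate. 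Under the standing hypothesis from Subsection~\ref{ssec:numerical-method+accuracy} that $P_\infty^{\LR}(t)$ is as smooth as needed, this should follow because the unique solution of a second-kind Fredholm equation inherits the regularity of its kernel and of the inhomogeneous term $\upsilon(x)\equiv 1$. The caveat ``$N$ sufficiently large'' presumably also enters at this stage, guaranteeing that $\widetilde{\ell}_N$ itself exists uniquely via a Neumann-series argument applied to $\mathbb{I}-\pi_N\mathcal{K}_\infty$, since $\|\pi_N\mathcal{K}_\infty\|_\infty\to\|\mathcal{K}_\infty\|_\infty<1$ as $h\downarrow 0$.
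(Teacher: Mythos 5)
Your argument is correct and follows the paper's proof essentially verbatim: both specialize the generic estimate~\eqref{eq:err-bound-1} to $u=\ell$, invoke Lemma~\ref{lem:ARL-inv-bound} to replace $\|(\mathbb{I}-\mathcal{K}_\infty)^{-1}\|_\infty$ exactly by $\|\ell\|_\infty$, use $\|\mathcal{K}_\infty\|_\infty<1$ to obtain the strict inequality, and control $\|(\mathbb{I}-\pi_N)\circ\ell\|_\infty$ by the standard piecewise-linear interpolation bound $\|\ell_{xx}\|_\infty h^2/8$, which the paper simply cites from Atkinson and Han while you rederive it via the Taylor remainder on each subinterval. Your closing remarks on the regularity of $\ell(\cdot,A)$ and on the existence of $\widetilde{\ell}_N$ for $N$ large via a Neumann-series argument are sound elaborations of hypotheses the paper leaves implicit, not a different route.
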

\begin{proof}
It suffices to recall the standard bound on the polynomial interpolation error
\begin{align*}
\|(\mathbb{I}-\pi_N)\circ \ell\|_\infty
&\leq
\|\ell_{xx}\|_\infty\frac{h^2}{8};
\end{align*}
see, e.g.,~\cite[Formula~3.2.9, p.~124]{Atkinson+Han:Book09}. The desired inequality can then be readily obtained from~\eqref{eq:err-bound-1}, Lemma~\ref{lem:ARL-inv-bound}, and the fact that $\|\mathcal{K}\|_\infty<1$.
\end{proof}

We propose to use piecewise linear polynomial space. To this end, for any positive integer $N\ge2$, let $\Pi_N\colon 0\triangleq x_0<x_1<\ldots<x_{N-1}=A$ denote a partition of the interval $[0,A]$, and for $i=1,\ldots,N-1$ set $I_i^N\triangleq(x_{i-1}, x_i)$, $h_i\triangleq x_i-x_{i-1}(>0)$, and $h\triangleq h(N)=\max_{1\le i\le N-1} h_i$; assume also that $h\to0$ as $N\to\infty$.
Next, set $z_j=x_{j-1}$, $1\le j\le N$ and use the ``hat functions''
\begin{empheq}[%
    left={%
        \phi_i(x)=%
    \empheqlbrace}]{align}\nonumber
&\frac{x-x_{i - 2}}{h_{i-1}},\quad\text{if $x\in I_{i-1}^N, i>1$;}\label{eq:lin-basis}\\
&\frac{x_i-x}{h_{i}},\quad\text{if $x\in I_{i}^N, i<N$;}\\\nonumber
&0,\quad\text{otherwise},
\end{empheq}
where $1\le i\le N$. These functions are clearly linearly independent. One important remark is now in order to justify this choice. The error bound given in Theorem~\ref{thm:error-bound} is tight, and can be seen to be directly proportional to the magnitude of the solution, i.e., to $\ell(x,A)=\ARL(\mathcal{S}_A^x)$. Worse yet, it is also proportional to the detection threshold squared lurking in $h^2$. Since $\ARL(\mathcal{S}_A^x)\approx A/\xi-x$, and $\xi\in(0,1)$, one can roughly set $A\approx\ell(x,A)$ and conclude that the error bound is roughly proportional to the magnitude of the solution cubed. This may seem to drastically offset the second power of $N$ in the denominator buried in $h^2$. However, as we will demonstrate experimentally, due to the almost linearity of $\ell(x,A)$ in $x$ at least for $x\in[0,A]$, this does not happen. Specifically, since $\ell(x,A)\approx A/\xi-x$ at least for $x\in[0,A]$, we have:
\begin{align*}
\frac{\partial^2}{\partial x^2}
\ell(x,A)
&\approx 0,\;\text{at least for}\; x\in[0,A].
\end{align*}
This makes the obtained error bound extremely close to zero, even for relatively small $N$. Consequently, the method is rather accurate and does not require $N$ to be large. This is precisely the reason why we suggested to use the piecewise linear basis.

The only question left to discuss is that of the role of the change-of-measure ploy in all this. To this end, recall that to implement the numerical method, one has to compute the matrix~\eqref{eq:def-matrix-K}, which involves finding the corresponding integrals. At first, it may seem to be an additional source of errors, since the integrals would have to be evaluated numerically as well. However, due to the change-of-measure identity, $(1+x){K}_0(x,y)=y{K}_\infty(x,y)$, the integrals involved in~\eqref{eq:def-matrix-K} can be computed {\it exactly}. Specifically, using~\eqref{eq:def-matrix-K} and~\eqref{eq:lin-basis}, and recalling that $z_j=x_{j-1}$, $1\le j\le N$, the corresponding formula is as follows:
\begin{align*}
(\boldsymbol{K}_\infty)_{i,j}
&=
\int_0^A{K}_\infty(x_i,y)\,\phi_j(y)\,dy\\
&=
\dfrac{1}{h_{j-1}}\left\{(1+x_i)\left[{P}_0^{\LR}\left(\dfrac{x_{j-1}}{1+x_i}\right)-{P}_0^{\LR}\left(\dfrac{x_{j-2}}{1+x_i}\right)\right]\right.-\\
&\qquad\qquad\qquad\qquad\qquad x_{j-2}\left.\left[{P}_\infty^{\LR}\left(\dfrac{x_{j-1}}{1+x_i}\right)-{P}_\infty^{\LR}\left(\dfrac{x_{j-2}}{1+x_i}\right)\right]\right\}\indicator{j>1}+\\
&\qquad\dfrac{1}{h_{j}}\left\{x_j\left[{P}_\infty^{\LR}\left(\dfrac{x_{j}}{1+x_i}\right)-{P}_\infty^{\LR}\left(\dfrac{x_{j-1}}{1+x_i}\right)\right]\right.-\\
&\qquad\qquad\qquad\qquad\qquad (1+x_i)\left.\left[{P}_0^{\LR}\left(\dfrac{x_{j}}{1+x_i}\right)-{P}_0^{\LR}\left(\dfrac{x_{j-1}}{1+x_i}\right)\right]\right\}\indicator{j<N}
\end{align*}
for $1\le i,j\le N$.

To wrap this subsection, note that the developed numerical framework can also be used to assess the accuracy of the popular Markov chain approach, introduced by~\cite{Brook+Evans:B1972}, and later extended, e.g., by~\cite{Woodall:T1983}. To this end, as noted by~\cite{Champ+Rigdon:CommStat1991}, the Markov chain approach is equivalent to the integral-equations approach if the integral is approximated via the product midpoint rule. This, in turn, is equivalent to choosing the basis functions, $\{\phi_i(x)\}_{1\le j\le N}$, as piecewise constants on $\Pi_{N+1}$, i.e., $\phi_j(x)=\indicator{x\in I_j^{N+1}}$, and equating the residual to zero at the midpoints of the intervals $I_j^{N+1}$, i.e., setting $z_j=(x_{j-1}+x_j)/2$, $1\le j\le N$. In this case the $(i,j)$-th element of the matrix $\boldsymbol{K}$ defined by~\eqref{eq:def-matrix-K} is as follows:
\begin{align*}
(\boldsymbol{K}_\infty)_{i,j}
&=
P_\infty^{\LR}\left(\dfrac{x_j}{1+z_i}\right)
-
P_\infty^{\LR}\left(\dfrac{x_{j-1}}{1+z_i}\right),\;1\le i,j\le N.
\end{align*}
It can be shown that this approach exhibits a superconvergence effect: the rate is also quadratic, even though the interpolation is based on using polynomials of degree zero (i.e., constants). This was first observed by~\cite{Kryloff+Bogoliubov:CRA-URSS1929}. See also~\cite[pp.~130--135]{Kantorovich+Krylov:Book1958} for a different proof. However, in spite of the quadratic convergence, the corresponding constant in the error bound is large, and, as a result, the partition size required by the method ends up being substantial. In fact, this method was employed, e.g., by~\cite{Moustakides+etal:CommStat09}, to compare the CUSUM chart and the original SR procedure, and the partition size they used consisted of thousands of points to ensure reasonable accuracy. In the next section we will compare this method against ours, and show that our method is superior.

Next, we apply the proposed numerical method to a particular example to study the GSR procedure's pre-change Run-Length distribution. We note that, to the best of our knowledge, this is the first time the GSR procedure's actual pre-change Run-Length distribution is looked at.

\section{A case study}
\label{sec:case-study}

Consider a scenario where the observations are independent Gaussian with mean zero pre-change and $\theta\neq0$ post-change; the variance is 1 (known) and does not change. Formally, the pre- and post-change densities in this case are
\begin{align*}
f(x)
&=
\dfrac{1}{\sqrt{2\pi}}\exp\left\{-\dfrac{x^2}{2}
\right\}\;\text{and}\;
g(x)=\dfrac{1}{\sqrt{2\pi}}\exp\left\{-\dfrac{(x-\theta)^2}{2}\right\},
\end{align*}
respectively, where $x\in\mathbb{R}$ and $\theta\neq0$.

Next, it is straightforward to see that
\begin{align*}
\LR_n
&\triangleq
\dfrac{g(X_n)}{f(X_n)}
=\exp\left\{\theta X_n-\frac{\theta^2}{2}\right\},\;n\ge1,
\end{align*}
and subsequently, $P_d^{\LR}(t)\triangleq\Pr_d(\LR_1\le t)$ is log-normal with mean $-\theta^2/2$ and variance $\theta^2$ under $d=\infty$, and with mean $\theta^2/2$ and variance $\theta^2$ under $d=0$. Hence, one obtains
\begin{align*}
K_\infty(x,y)
&\triangleq
\dfrac{\partial}{\partial y}P_\infty^{\LR}\left(\dfrac{y}{1+x}\right)
=
\dfrac{1}{y\sqrt{2\pi\theta^2}}\exp\left\{-\dfrac{1}{2\theta^2}\left(\log\dfrac{y}{1+x}+\dfrac{\theta^2}{2}\right)^2\right\}\indicator{y/(1+x)\ge0},
\end{align*}
and one can see that the kernel is the same regardless of whether $\theta<0$ or $\theta>0$. We therefore, without any loss of generality, will consider only the former case, i.e., assume from now on that $\theta>0$. As $(1+x)K_0(x,y)=y K_\infty(x,y)$, one could quickly obtain the formula for $K_0(x,y)$, if it were necessary. However, it isn't, since the change-of-measure identity allows to express everything in terms of $K_\infty(x,y)$ only.

We now employ the proposed numerical method to evaluate the characteristics of the GSR procedure. The method was implemented in MATLAB. To test the accuracy of the method, we will perform our analysis for a broad range of changes $\theta=0.01, 0.1, 0.5$ and $1.0$, i.e., from very faint, to small, to moderate, to very contrast, respectively. We will also consider a wide range of values of the ARL to false alarm: $\gamma=10^2, 10^3, 10^4$ and $10^5$, i.e., from large risk of false alarm, to moderate, to low, respectively (the false alarm risk is inversely proportional to the ARL to false alarm).

Since the accuracy of our numerical method is determined by the accuracy of the underlying piecewise linear polynomial interpolation, we recall that forming the partition of the interval of integration, $[0,A]$, by the Chebyshev nodes (i.e., roots of the Chebyshev polynomials of the first kind) will result in the smallest possible interpolating error. Subsequently, the overall accuracy of the method will improve as well. Specifically, we partition the interval $[0,A]$ into $N-1$, $N\ge2$, non-overlapping subintervals joint at the shifted Chebyshev nodes of the form:
\begin{align*}
x_{N-i}
&=
\frac{A}{2}\left\{1+\cos\left[(2i - 1)\dfrac{\pi}{2N}\right]/\cos\left(\dfrac{\pi}{2N}\right)\right\},\;1\le i\le N.
\end{align*}
The shift is to make sure the left and right end nodes coincide with the left and right end points of the interval $[0,A]$, i.e., $x_0=0$ and $x_{N-1}=A$. To evaluate the solution at a point different from one of $x_j$'s, we will use the iterated solution~\eqref{eq:int-eqn-iter-sol}. As we will see below, the method's accuracy is high even if $N$ is small, and the accuracy remains high for a large range of $\theta$'s and for a large range of values of the ARL to false alarm. To measure the rate of convergence of the method we will rely on the common Richardson extrapolation technique: if $\ell_{2N}$, $\ell_N$ and $\ell_{N/2}$ are the solutions assuming the partition size is $2N$, $N$ and $N/2$, respectively, then the corresponding rate of convergence, $p$, can be estimated as
\begin{align*}
2^{-p(N)}
&\approx
\dfrac{\|\ell_{2N}-\ell_N\|_\infty}{\|\ell_{N}-\ell_{N/2}\|_\infty},\;\text{so that}\;
p(N)\approx-\log_2\dfrac{\|\ell_{2N}-\ell_N\|_\infty}{\|\ell_{N}-\ell_{N/2}\|_\infty}.
\end{align*}
We will try $N=2^i$ for $i=1,2,\ldots,12$. Also, the actual error can be estimated by using $\|\ell-\ell_N\|_\infty\approx2^{-p(N)}\|\ell_{N}-\ell_{N/2}\|_\infty$.

The first set of numerical results from the convergence analysis is summarized in Tables~\ref{tab:ARL__theta__0_01},~\ref{tab:ARL__theta__0_1},~\ref{tab:ARL__theta__0_5}, and~\ref{tab:ARL__theta__1_0}. Specifically, each table is for a specific value of $\theta$, and reports the estimated convergence rate for various values of the ARL to false alarm for two methods: one proposed in this work and one proposed in the work of~\cite{Moustakides+etal:SS11}. For the latter, the corresponding results are shown in brackets with [NaN]'s indicating the method's failure. We observe that, as expected from Theorem~\ref{thm:error-bound}, the rate of convergence of our method is quadratic. More importantly, it is attained almost right away, regardless of the magnitude of the change or the value of the ARL to false alarm. The reason is the use of the linear interpolation to approximate the solution of the corresponding integral equation, which is almost linear with respect to the headstart. Hence, one would expect our numerical scheme to work well even for small $N$. The results presented in the tables do confirm that. On the other hand, the method of~\cite{Moustakides+etal:SS11} can be seen to converge much slower, requiring $N$ to be in the thousands or higher to deliver decent accuracy.
\begin{sidewaystable}
    \centering
    \begin{tabularx}{\textwidth}{c *{9}{Y}}
    \toprule

     & \multicolumn{2}{c}{$\gamma=10^2$ ($A=99.2$)}
     & \multicolumn{2}{c}{$\gamma=10^3$ ($A=994.2$)}
     & \multicolumn{2}{c}{$\gamma=10^4$ ($A=9941.9$)}
     & \multicolumn{2}{c}{$\gamma=10^5$ ($A=99419.0$)}\\
    \cmidrule(lr){2-3} \cmidrule(lr){4-5} \cmidrule(lr){6-7} \cmidrule(lr){8-9}
          $N$ & $\ARL(\mathcal{S}_A)$ & Rate & $\ARL(\mathcal{S}_A)$ & Rate & $\ARL(\mathcal{S}_A)$ & Rate & $\ARL(\mathcal{S}_A)$ & Rate\\
    \midrule
        \multirow{2}{*}{2}&100.48747&-&1,002.54172&-&10,021.90967&-&100,215.7892&-\\
        &[NaN]&-&[NaN]&-&[NaN]&-&[NaN]&-\\
        \midrule
        \multirow{2}{*}{4}&100.48747&0.37551&1,002.54172&0.68221&10,021.90967&0.77378&100,215.7892&0.34451\\
        &[NaN]&[NaN]&[NaN]&[NaN]&[NaN]&[NaN]&[NaN]&[NaN]\\
        \midrule
        \multirow{2}{*}{8}&100.48747&-34.82763&1,002.54172&-32.94567&10,021.90967&-38.09409&100,215.7892&-27.23368\\
        &[NaN]&[NaN]&[NaN]&[NaN]&[NaN]&[NaN]&[NaN]&[NaN]\\
        \midrule
        \multirow{2}{*}{16}&100.22051&-0.72003&1,001.72125&-1.39624&10,014.54917&-1.44092&100,142.96209&-1.44092\\
        &[NaN]&[NaN]&[NaN]&[NaN]&[NaN]&[NaN]&[NaN]&[NaN]\\
        \midrule
        \multirow{2}{*}{32}&100.10203&2.81168&1,000.52457&2.57269&10,002.81619&2.55093&100,025.87105&2.54878\\
        &[NaN]&[NaN]&[NaN]&[NaN]&[NaN]&[NaN]&[NaN]&[NaN]\\
        \midrule
        \multirow{2}{*}{64}&100.08052&2.09695&1,000.33233&2.00637&10,000.90452&1.99517&100,006.76431&1.99403\\
        &[89.6219]&-&[NaN]&[NaN]&[NaN]&[NaN]&[NaN]&[NaN]\\
        \midrule
        \multirow{2}{*}{128}&100.07523&2.02038&1,000.28277&1.99916&10,000.4097&1.99661&100,001.81661&1.99635\\
        &[104.01353]&[10.96141]&[NaN]&[NaN]&[NaN]&[NaN]&[NaN]&[NaN]\\
        \midrule
        \multirow{2}{*}{256}&100.07391&2.00506&1,000.27031&1.99966&10,000.28518&1.99901&100,000.57131&1.99894\\
        &[99.57944]&[-6.54683]&[NaN]&[NaN]&[NaN]&[NaN]&[NaN]&[NaN]\\
        \midrule
        \multirow{2}{*}{512}&100.07358&2.00126&1,000.2672&1.99991&10,000.25399&1.99974&100,000.25943&1.99973\\
        &[100.11896]&[8.90420]&[995.50653]&-&[NaN]&[NaN]&[NaN]&[NaN]\\
        \midrule
        \multirow{2}{*}{1024}&100.0735&2.00032&1,000.26642&1.99998&10,000.24619&1.99993&100,000.18142&1.99993\\
        &[100.07442]&[-2.61104]&[1,000.52276]&[9.90705]&[NaN]&[NaN]&[NaN]&[NaN]\\
        \midrule
        \multirow{2}{*}{2048}&100.07348&2.00008&1,000.26622&1.99999&10,000.24424&1.99998&100,000.16191&1.99999\\
        &[100.07813]&[0.38884]&[1,000.3918]&[1.47489]&[NaN]&[NaN]&[NaN]&[NaN]\\
        \midrule
        \multirow{2}{*}{4096}&100.07347&-&1,000.26617&-&10,000.24375&-&100,000.15704&-\\
        &[100.07324]&-&[1,000.31815]&-&[NaN]&-&[NaN]&-\\
    \bottomrule
    \end{tabularx}
    \caption{Convergence analysis for $\ARL(\mathcal{S}_A)$ assuming $\theta=0.01$.}
    \label{tab:ARL__theta__0_01}
\end{sidewaystable}
\begin{sidewaystable}
    \centering
    \begin{tabularx}{\textwidth}{c *{9}{Y}}
    \toprule

     & \multicolumn{2}{c}{$\gamma=10^2$ ($A=94.34$)}
     & \multicolumn{2}{c}{$\gamma=10^3$ ($A=943.41$)}
     & \multicolumn{2}{c}{$\gamma=10^4$ ($A=9434.08$)}
     & \multicolumn{2}{c}{$\gamma=10^5$ ($A=94340.5$)}\\
    \cmidrule(lr){2-3} \cmidrule(lr){4-5} \cmidrule(lr){6-7} \cmidrule(lr){8-9}
          $N$ & $\ARL(\mathcal{S}_A)$ & Rate & $\ARL(\mathcal{S}_A)$ & Rate & $\ARL(\mathcal{S}_A)$ & Rate & $\ARL(\mathcal{S}_A)$ & Rate\\
    \midrule
        \multirow{2}{*}{2}&102.58157&-&1,022.1717&-&10,218.14169&-&102,177.54835&-\\
        &[$\approx 4\times 10^{10}$]&-&[$\approx 5.2\times 10^{11}$]&-&[$\approx 6.8\times 10^{11}$]&-&[$\approx 6.9\times 10^{11}$]&-\\
        \midrule
        \multirow{2}{*}{4}&102.55852&-7.36401&1,022.00734&-7.69785&10,216.55311&-7.73156&102,161.71661&-7.73494\\
        &[$\approx 2.9\times 10^{9}$]&[8.39767]&[$\approx 3.9\times 10^{11}$]&[2.50739]&[$\approx 6.7\times 10^{11}$]&[-0.22883]&[$\approx 7\times 10^{11}$]&[-0.72388]\\
        \midrule
        \multirow{2}{*}{8}&100.72368&2.24005&1,004.66098&2.19119&10,044.04284&2.18632&100,437.56746&2.18583\\
        &[$\approx 3.3\times 10^{7}$]&[13.16679]&[$\approx 2.2\times 10^{11}$]&[0.81472]&[$\approx 6.4\times 10^{11}$]&[-1.53456]&[$\approx 7.2\times 10^{11}$]&[-0.97868]\\
        \midrule
        \multirow{2}{*}{16}&100.39247&1.96637&1,001.3746&1.95635&10,011.20082&1.95538&100,109.16953&1.95529\\
        &[$\approx 4.1\times 10^4$]&[7.54089]&[$\approx 6.9\times 10^{10}$]&[2.01676]&[$\approx 5.9\times 10^{11}$]&[0.12059]&[$\approx 7.5\times 10^{11}$]&[-1.02206]\\
        \midrule
        \multirow{2}{*}{32}&100.3113&2.00152&1,000.55799&1.99687&10,003.02932&1.99639&100,027.44935&1.99635\\
        &[128.34953]&[2.03617]&[$\approx 8\times 10^{9}$]&[5.07392]&[$\approx 4.9\times 10^{11}$]&[3.77045]&[$\approx 8\times 10^{11}$]&[-0.93099]\\
        \midrule
        \multirow{2}{*}{64}&100.29088&2.00006&1,000.35206&1.99884&10,000.96814&1.99872&100,006.83578&1.9987\\
        &[99.87488]&[-3.6633]&[$\approx 1.8\times 10^{8}$]&[10.40265]&[$\approx 3.2\times 10^{11}$]&[-1.53804]&[$\approx 8.8\times 10^{11}$]&[-0.63049]\\
        \midrule
        \multirow{2}{*}{128}&100.28576&1.99999&1,000.30045&1.99968&10,000.45155&1.99965&100,001.66941&1.99965\\
        &[100.24439]&[3.23903]&[$\approx 4.3\times 10^5$]&[14.55815]&[$\approx 1.3\times 10^{11}$]&[1.01609]&[$\approx 10^{12}$]&[-0.01633]\\
        \midrule
        \multirow{2}{*}{256}&100.28448&2.0&1,000.28754&1.99992&10,000.32232&1.99991&100,000.37698&1.99991\\
        &[100.3111]&[2.2709]&[1,249.19154]&[3.06013]&[$\approx 2\times 10^{10}$]&[3.82822]&[$\approx 1.1\times 10^{12}$]&[7.72911]\\
        \midrule
        \multirow{2}{*}{512}&100.28416&2.0&1,000.28431&1.99998&10,000.29001&1.99998&100,000.05382&1.99998\\
        &[100.28414]&[9.51626]&[1,000.05942]&[-5.63427]&[$\approx 7.6\times 10^{8}$]&[8.34186]&[$\approx 9.2\times 10^{11}$]&[-4.70896]\\
        \midrule
        \multirow{2}{*}{1024}&100.28408&2.0&1,000.2835&1.99999&10,000.28193&1.99999&99,999.97303&2.00001\\
        &[100.28415]&[-1.36052]&[1,000.14412]&[1.30463]&[$\approx 3.8\times 10^{6}$]&[15.53288]&[$\approx 4.5\times 10^{11}$]&[0.26133]\\
        \midrule
        \multirow{2}{*}{2048}&100.28406&2.0&1,000.2833&2.00000&10,000.27991&2.00000&99,999.95283&1.99995\\
        &[100.28408]&[2.0]&[1,000.23442]&[1.93052]&[12,725.65581]&[9.49778]&[$\approx 7.4\times 10^{10}$]&[2.79838]\\
        \midrule
        \multirow{2}{*}{4096}&100.28406&-&1,000.28325&-&10,000.27941&-&99,999.94779&-\\
        &[100.28406]&-&[1,000.28103]&-&[10,003.04331]&-&[$\approx 3\times 10^{9}$]]&-\\
    \bottomrule
    \end{tabularx}
    \caption{Convergence analysis for $\ARL(\mathcal{S}_A)$ assuming $\theta=0.1$.}
    \label{tab:ARL__theta__0_1}
\end{sidewaystable}
\begin{sidewaystable}
    \centering
    \begin{tabularx}{\textwidth}{c *{9}{Y}}
    \toprule

     & \multicolumn{2}{c}{$\gamma=10^2$ ($A=74.76$)}
     & \multicolumn{2}{c}{$\gamma=10^3$ ($A=747.62$)}
     & \multicolumn{2}{c}{$\gamma=10^4$ ($A=7476.15$)}
     & \multicolumn{2}{c}{$\gamma=10^5$ ($A=74761.5$)}\\
    \cmidrule(lr){2-3} \cmidrule(lr){4-5} \cmidrule(lr){6-7} \cmidrule(lr){8-9}
          $N$ & $\ARL(\mathcal{S}_A)$ & Rate & $\ARL(\mathcal{S}_A)$ & Rate & $\ARL(\mathcal{S}_A)$ & Rate & $\ARL(\mathcal{S}_A)$ & Rate\\
    \midrule
        \multirow{2}{*}{2}&112.08343&-&1,115.94875&-&11,154.51566&-&111,540.26117&-\\
        &[40.48187]&-&[50.02797]&-&[51.15832]&-&[51.27337]&-\\
        \midrule
        \multirow{2}{*}{4}&102.84839&2.23395&1,024.79306&2.20209&10,244.14193&2.19892&102,437.69936&2.1986\\
        &[70.3026]&[0.17979]&[109.88631]&[-0.9552]&[115.49826]&[-1.07609]&[116.08141]&[-1.08829]\\
        \midrule
        \multirow{2}{*}{8}&100.9572&2.18209&1,005.64335&2.18298&10,052.41311&2.18308&100,520.1782&2.18309\\
        &[96.62922]&[1.81075]&[225.94273]&[-0.75784]&[251.147]&[-1.00629]&[253.8778]&[-1.03152]\\
        \midrule
        \multirow{2}{*}{16}&100.57276&2.0067&1,001.74986&2.0052&10,013.43&2.00504&100,130.29863&2.00502\\
        &[104.13341]&[2.86321]&[422.18974]&[-0.45992]&[523.63068]&[-0.95905]&[535.55853]&[-1.01016]\\
        \midrule
        \multirow{2}{*}{32}&100.47686&2.00153&1,000.77749&2.0011&10,003.69296&2.00106&100,032.91489&2.00106\\
        &[101.45615]&[2.02681]&[692.12063]&[0.10653]&[1,053.34664]&[-0.89331]&[1,102.90019]&[-0.99597]\\
        \midrule
        \multirow{2}{*}{64}&100.45288&2.00037&1,000.53434&2.00026&10,001.25812&2.00025&100,008.56303&2.00025\\
        &[100.52541]&[6.07368]&[942.83838]&[1.37428]&[2,037.25903]&[-0.77560]&[2,234.4169]&[-0.98059]\\
        \midrule
        \multirow{2}{*}{128}&100.44689&2.00009&1,000.47355&2.00007&10,000.64937&2.00006&100,002.47469&2.00006\\
        &[100.44801]&[1.84739]&[1,039.55157]&[3.70162]&[3,721.61729]&[-0.54838]&[4,467.20008]&[-0.95572]\\
        \midrule
        \multirow{2}{*}{256}&100.44539&2.00002&1,000.45836&2.00002&10,000.49718&2.00002&100,000.95257&2.00002\\
        &[100.44504]&[3.9197]&[1,016.14038]&[0.72821]&[6,184.8977]&[-0.0999]&[8,797.78196]&[-0.90889]\\
        \midrule
        \multirow{2}{*}{512}&100.44501&2.00001&1,000.45456&2.0&10,000.45913&2.0&100,000.57205&2.0\\
        &[100.44497]&[0.61014]&[1,001.12312]&[4.70144]&[8,824.79151]&[0.84811]&[16,928.882]&[-0.81709]\\
        \midrule
        \multirow{2}{*}{1024}&100.44492&2.0&1,000.45361&2.0&10,000.44962&2.0&100,000.4769&2.00002\\
        &[100.44491]&[1.9999]&[1,000.4463]&[2.95142]&[10,291.28225]&[7.04837]&[31,254.63194]&[-0.63583]\\
        \midrule
        \multirow{2}{*}{2048}&100.4449&2.0&1,000.45337&2.0&10,000.44724&2.0&100,000.4531&1.99992\\
        &[100.44489]&[2.0]&[1,000.4529]&[5.77463]&[10,257.2434]&[-2.68232]&[53,514.38599]&[-0.27727]\\
        \midrule
        \multirow{2}{*}{4096}&100.44489&-&1,000.45331&-&10,000.44665&-&100,000.44718&-\\
        &[100.44489]&-&[1,000.4533]&-&[10,023.86256]&-&[80,490.89164]&-\\
    \bottomrule
    \end{tabularx}
    \caption{Convergence analysis for $\ARL(\mathcal{S}_A)$ assuming $\theta=0.5$.}
    \label{tab:ARL__theta__0_5}
\end{sidewaystable}
\begin{sidewaystable}
    \centering
    \begin{tabularx}{\textwidth}{c *{9}{Y}}
    \toprule
     & \multicolumn{2}{c}{$\gamma=10^2$ ($A=56.0$)}
     & \multicolumn{2}{c}{$\gamma=10^3$ ($A=560.0$)}
     & \multicolumn{2}{c}{$\gamma=10^4$ ($A=5603.5$)}
     & \multicolumn{2}{c}{$\gamma=10^5$ ($A=56037.0$)}\\
    \cmidrule(lr){2-3} \cmidrule(lr){4-5} \cmidrule(lr){6-7} \cmidrule(lr){8-9}
          $N$ & $\ARL(\mathcal{S}_A)$ & Rate & $\ARL(\mathcal{S}_A)$ & Rate & $\ARL(\mathcal{S}_A)$ & Rate & $\ARL(\mathcal{S}_A)$ & Rate\\
    \midrule
        \multirow{2}{*}{2}&126.30518&-&1,255.83903&-&12,558.81029&-&125,585.16094&-\\
        &[18.90254]&-&[21.19442]&-&[21.4518]&-&[21.47784]&-\\
        \midrule
        \multirow{2}{*}{4}&102.91218&3.73911&1,021.53276&3.80202&10,214.35752&3.80672&102,139.87197&3.80718\\
        &[36.13905]&[-0.44375]&[46.03212]&[-0.96967]&[47.27734]&[-1.03168]&[47.40492]&[-1.038]\\
        \midrule
        \multirow{2}{*}{8}&101.36866&1.73324&1,007.01242&1.58154&10,069.74031&1.56622&100,694.31777&1.56474\\
        &[59.58298]&[0.04205]&[94.67408]&[-0.88292]&[100.07532]&[-1.00644]&[100.64298]&[-1.01923]\\
        \midrule
        \multirow{2}{*}{16}&100.8784&2.04648&1,001.71281&2.13084&10,016.23849&2.14898&100,158.83358&2.15053\\
        &[82.35344]&[0.74187]&[184.37489]&[-0.74321]&[206.14337]&[-0.98293]&[208.5477]&[-1.00859]\\
        \midrule
        \multirow{2}{*}{32}&100.76017&2.00161&1,000.52849&1.98007&10,004.47337&1.96408&100,041.24097&1.96265\\
        &[95.96936]&[1.6784]&[334.52493]&[-0.49879]&[415.78372]&[-0.94804]&[425.64534]&[-0.99915]\\
        \midrule
        \multirow{2}{*}{64}&100.73062&2.00075&1,000.22665&2.00270&10,001.43135&2.01008&100,010.79695&2.0115\\
        &[100.22338]&[2.99676]&[546.69114]&[-0.09267]&[820.23101]&[-0.8899]&[859.58367]&[-0.99075]\\
        \midrule
        \multirow{2}{*}{128}&100.72324&2.00019&1,000.15136&2.00010&10,000.67756&1.99883&100,003.2628&1.99767\\
        &[100.75632]&[5.36654]&[772.93304]&[0.5279]&[1,569.69277]&[-0.78224]&[1,721.91099]&[-0.97816]\\
        \midrule
        \multirow{2}{*}{256}&100.72139&2.00005&1,000.13254&2.00004&10,000.4889&2.00011&100,001.37516&2.00056\\
        &[100.73537]&[1.02544]&[929.84595]&[1.38532]&[2,858.62024]&[-0.58435]&[3,420.64864]&[-0.95481]\\
        \midrule
        \multirow{2}{*}{512}&100.72093&2.00001&1,000.12783&2.00001&10,000.44174&2.00001&100,000.90348&1.99996\\
        &[100.72228]&[3.26059]&[989.91309]&[2.54712]&[4,791.19095]&[-0.24044]&[6,713.34755]&[-0.90975]\\
        \midrule
        \multirow{2}{*}{1024}&100.72082&2.0&1,000.12666&2.0&10,000.42995&2.0&100,000.78555&2.00001\\
        &[100.72085]&[4.39192]&[1,000.19036]&[6.04614]&[7,074.24421]&[0.30588]&[12,899.39463]&[-0.82298]\\
        \midrule
        \multirow{2}{*}{2048}&100.72079&2.0&1,000.12636&2.0&10,000.427&1.99999&100,000.75607&1.99996\\
        &[100.72078]&[4.19398]&[1,000.34589]&[0.63957]&[8,921.11562]&[1.08545]&[23,842.84271]&[-0.66023]\\
        \midrule
        \multirow{2}{*}{4096}&100.72078&-&1,000.12629&-&10,000.42626&-&100,000.7487&-\\
        &[100.72078]&-&[1,000.15503]&-&[9,791.44373]&-&[41,137.15155]&-\\
    \bottomrule
    \end{tabularx}
    \caption{Convergence analysis for $\ARL(\mathcal{S}_A)$ assuming $\theta=1.0$.}
    \label{tab:ARL__theta__1_0}
\end{sidewaystable}
\begin{sidewaystable}[p]
    \centering
    \begin{tabularx}{\textwidth}{c *{10}{Y}}
    \toprule
     &
     & \multicolumn{2}{c}{$\gamma=10^2$}
     & \multicolumn{2}{c}{$\gamma=10^3$}
     & \multicolumn{2}{c}{$\gamma=10^4$}
     & \multicolumn{2}{c}{$\gamma=10^5$}\\
    \cmidrule(lr){3-4} \cmidrule(lr){5-6} \cmidrule(lr){7-8} \cmidrule(lr){9-10}
          $\theta$ & $R_0^r=r$ & $\ARL(\mathcal{S}_A^r)$ & $\sqrt{\Var_\infty(\mathcal{S}_A^r)}$ & $\ARL(\mathcal{S}_A^r)$ & $\sqrt{\Var_\infty(\mathcal{S}_A^r)}$ & $\ARL(\mathcal{S}_A^r)$ & $\sqrt{\Var_\infty(\mathcal{S}_A^r)}$ & $\ARL(\mathcal{S}_A^r)$ & $\sqrt{\Var_\infty(\mathcal{S}_A^r)}$\\
    \midrule
        \multirow{4}{*}{0.01}&0&100.27&5.86&1,000.27&176.65&10,000.24&4,565.16&100,000.16&78,476.25\\\cmidrule(lr){2-10}
        &$10^2$&1.08&0.42&900.27&3.81767&9,900.24&4,565.15763&99,900.16&78,476.24\\\cmidrule(lr){2-10}
        &$10^3$&1.0&0.0&4.11&17.51&9,000.24&4,561.56068&99,000.16&78,475.93\\\cmidrule(lr){2-10}
        &$10^4$&1.0&0.0&1.0&0.0&35.49&438.74&90,000.16&78,336.91\\
    \midrule
        \multirow{4}{*}{0.1}&0&100.28&45.92&1,000.28&783.89&10,000.28&9,470.47&99,999.95&99,060.19\\\cmidrule(lr){2-10}
        &$10^2$&4.13&13.58&900.28&782.54&9,900.28&9,470.36&99,899.95&99,060.18\\\cmidrule(lr){2-10}
        &$10^3$&1.0&0.0&35.52&218.63&9,000.28&9,437.96&98,999.95&99,057.08\\\cmidrule(lr){2-10}
        &$10^4$&1.0&0.0&1.0&0.0&349.47&2,525.59&89,999.95&98,606.29\\
    \midrule
        \multirow{4}{*}{0.5}&0&100.44&87.69&1,000.45&973.27&10,000.45&9,956.0&100,000.45&99,937.89\\\cmidrule(lr){2-10}
        &$10^2$&18.11&51.07&900.45313&969.33&9,900.45&9,955.62&99,900.45&99,937.85\\\cmidrule(lr){2-10}
        &$10^3$&1.00001&0.05&173.96&552.33&9,000.44&9,908.34&99,000.45&99,933.15\\\cmidrule(lr){2-10}
        &$10^4$&1.0&0.0&1.0&0.52&1,732.4&5,608.77&90,000.43&99,440.75\\
    \midrule
        \multirow{4}{*}{1.0}&0&100.72&95.72&1,000.13&991.03&10,000.43&9,986.83961&100,000.76&99,982.6\\\cmidrule(lr){2-10}
        &$10^2$&34.92&72.78&899.83&986.41496&9,900.43&9,986.39&99,900.76&99,982.55\\\cmidrule(lr){2-10}
        &$10^3$&1.55&10.15&339.33&745.3&8,997.51&9,937.25&99,000.75&99,977.68\\\cmidrule(lr){2-10}
        &$10^4$&1.0&0.13&6.39&103.18&3,387.5&7,494.37&89,971.64&99,479.67\\
    \bottomrule
    \end{tabularx}
    \caption{Selected values of $\ARL(\mathcal{S}_A^r)$ and $\sqrt{\Var_\infty(\mathcal{S}_A^r)}$ vs. $R_0^r$, $A$, and $\theta$.}
    \label{tab:ARL+Var_vs_R_A_theta}
\end{sidewaystable}

We now proceed to computing the higher moments of the distribution of the GSR procedure's Run-Length. Specifically, consider the GSR procedure's stopping time's standard deviation: $\sqrt{\Var_\infty(\mathcal{S}_A^r)}\triangleq\sqrt{\EV_\infty[(\mathcal{S}_A^r)^2]-(\ARL(\mathcal{S}_A^r))^2}$. Table~\ref{tab:ARL+Var_vs_R_A_theta} presents selected values of $\ARL(\mathcal{S}_A^r)$ and $\sqrt{\Var_\infty(\mathcal{S}_A^r)}$ for different values of $R_0^r$, $A$, and $\theta$. We note that since the GSR procedure's stopping rule is asymptotically (as $A\to\infty$) geometric (see, e.g.,~\citealp{Pollak+Tartakovsky:TVP09,Tartakovsky+etal:IWAP08}), one would expect $\ARL(\mathcal{S}_A^r)$ and $\sqrt{\Var_\infty(\mathcal{S}_A^r)}$ to be close to one another. As may be seen from Table~\ref{tab:ARL+Var_vs_R_A_theta} this is, in fact, the case. However, the asymptotic geometric distribution is attained quicker for lower values of the headstart (i.e., for small $R_0^r$) and higher magnitudes of the change (i.e., for large $\theta$).

We conclude this section with a set of brief analysis of $\Pr_\infty(\mathcal{S}_A>k)$, $k\ge1$, i.e., the survival function of the SR stopping time, and consider various values of $\theta$ and $\ARL(\mathcal{S}_A)\approx\gamma>1$. Recall that that asymptotically (as $\gamma\to\infty$) the $\Pr_\infty$-distribution of $\mathcal{S}_A$ is geometric, and the parameter is the reciprocal of the ARL to false alarm; cf.~\cite{Pollak+Tartakovsky:TVP09,Tartakovsky+etal:IWAP08}. Hence, $\Pr_\infty(\mathcal{S}_A>k)$ should be close to $(1-1/\gamma)^k$ for sufficiently large $\gamma$. We therefore will focus on $\log\Pr_\infty(\mathcal{S}_A>k)$, which as a function of $k$ should be asymptotically (as $A\to\infty$) linear with a slope of $\log(1-1/\gamma)$.

Figure~\ref{fig:logPMF__ARL_100} shows the results for the case of $\gamma=10^2$ for $\theta=0.01,0.1,0.5$ and $1.0$. The dashed line corresponds to the geometric distribution expected in the limit. Note that $\gamma=10^2$ is generally too low of an ARL to false alarm for the asymptotic distribution to kick in. Hence, for $\theta=0.01$ and $0.1$ we observe a substantial deviation of the actual distribution of the GSR stopping time from the limiting geometric one. Yet for $\theta=0.5$ and $1.0$ the distribution is reasonably close to the geometric. Figure~\ref{fig:logPMF__ARL_1000} shows the results for the case of $\gamma=10^3$, also for $\theta=0.01,0.1,0.5$ and $1.0$. In this case the distribution for $\theta=0.5$ and $1.0$ is almost indistinguishable from geometric, and that for $\theta=0.1$ is much closer to being geometric than when $\gamma=10^2$; that said, $\theta=0.01$ is still a problem. Figure~\ref{fig:logPMF__ARL_10000} corresponds to the case when $\gamma=10^4$. In this case the distribution corresponding to $\theta=0.1$ is almost geometric, and that corresponding to $\theta=0.01$ is still considerably different from the limiting geometric. One conclusion to draw from this analysis is that the rate of convergence to the geometric distribution heavily depends on the magnitude of the change. For faint changes ($\theta=0.01$ and $0.1$), the rate is extremely slow, even if the ARL to false alarm is large. However, for moderate and contrast changes ($\theta=0.5$ and $1.0$), the distribution is very close to geometric already for $\gamma=10^2$.
\begin{sidewaysfigure}[p]
\begin{subfigure}[b]{0.35\textheight}
    \centering
    \includegraphics[width=0.35\textheight]{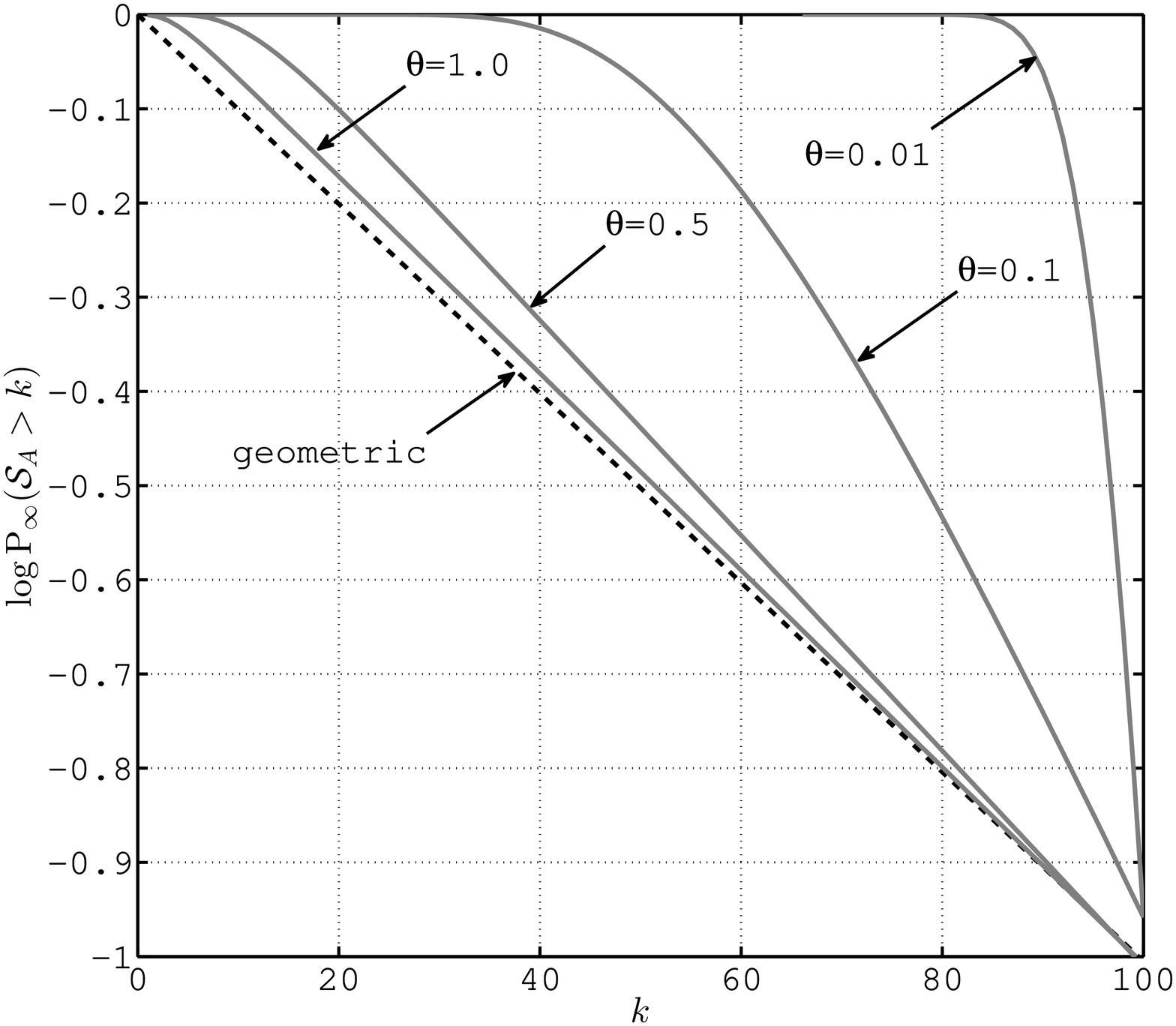}
    \caption{$\gamma=10^2$.}
    \label{fig:logPMF__ARL_100}
\end{subfigure}
\begin{subfigure}[b]{0.35\textheight}
    \centering
    \includegraphics[width=0.35\textheight]{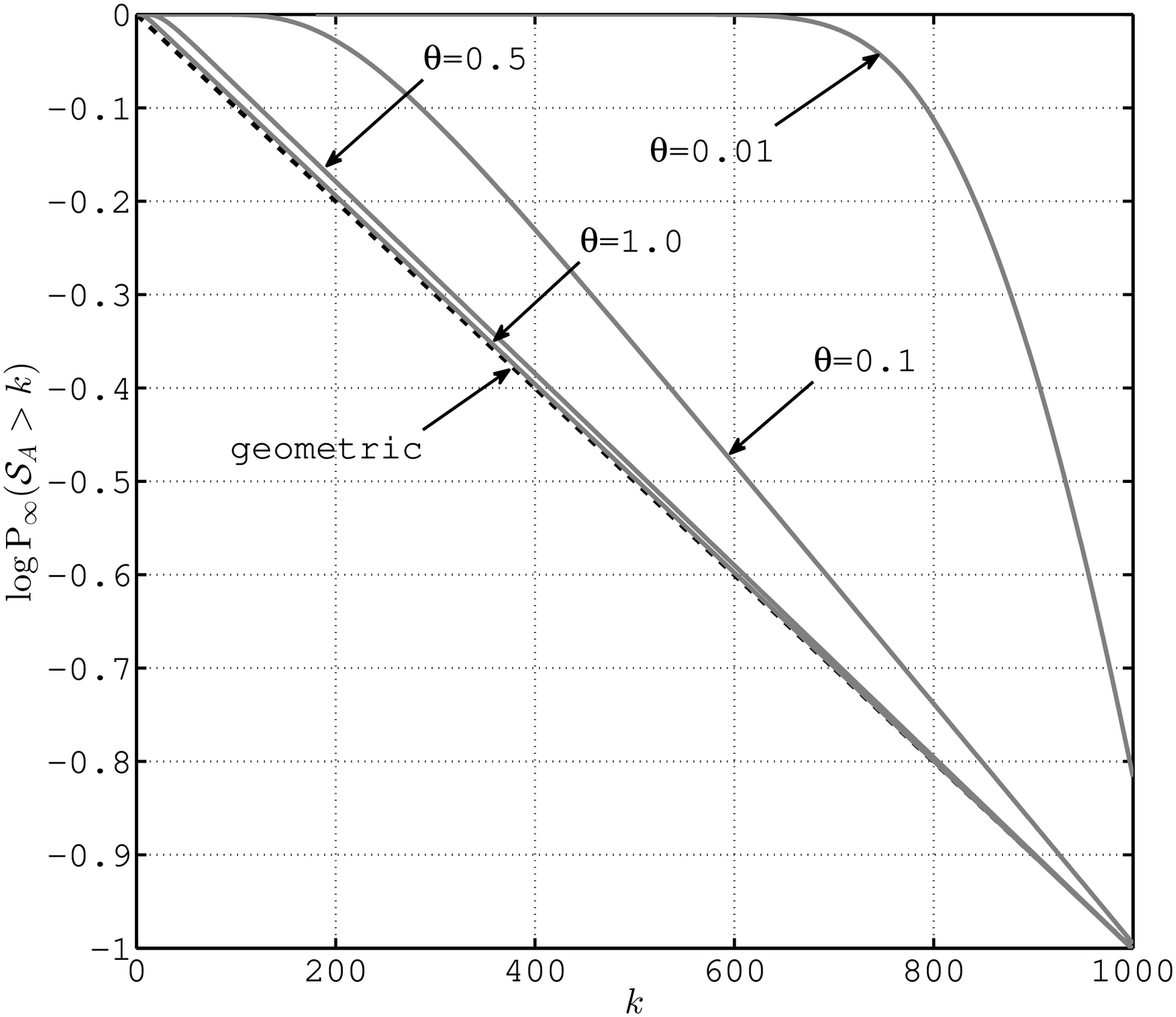}
    \caption{$\gamma=10^3$.}
    \label{fig:logPMF__ARL_1000}
\end{subfigure}
\begin{subfigure}[b]{0.35\textheight}
    \centering
    \includegraphics[width=0.35\textheight]{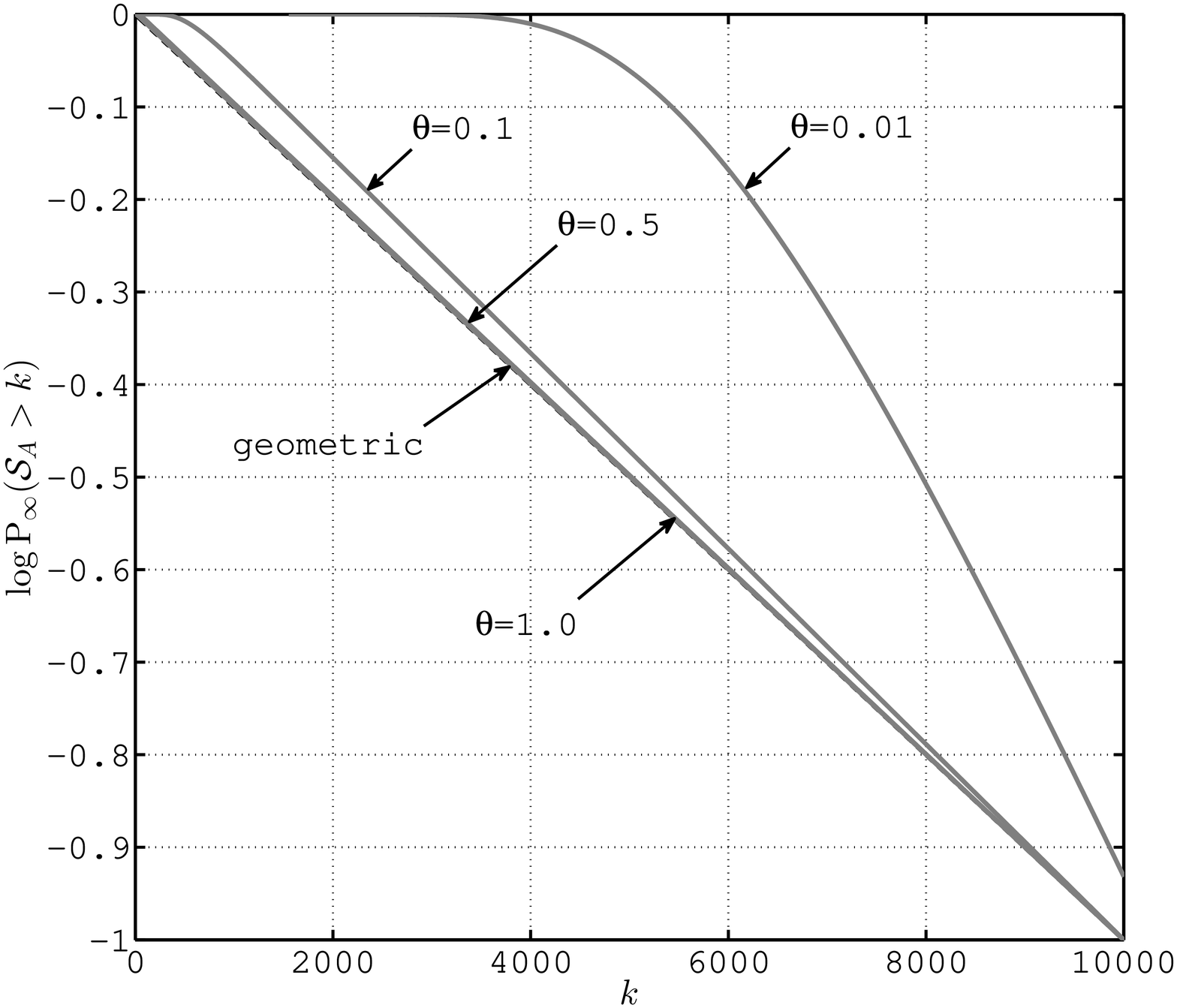}
    \caption{$\gamma=10^4$.}
    \label{fig:logPMF__ARL_10000}
\end{subfigure}
\caption{$\log\Pr(\mathcal{S}_A>k)$ as a function of $k=1,2,\ldots,\gamma$.}
\end{sidewaysfigure}

\section{Further discussion}
\label{sec:further-discussion}

The proposed numerical method (including the change-of-measure trick) can be extended to other performance measures as well as to other detection procedures. For example, note that throughout the paper we used the usual ARL to false alarm as a capture of the risk of false detection. An alternative, more exhaustive measure of he risk of false detection is based on the ``worst'' $\Pr_\infty$-probability of sounding a false alarm (PFA) within a batch of $m\ge1$ successive observations indexed $k+1$ through $k+m$ (inclusive), where $k\ge0$. To define this probability, one can either use $\sup_{0\le k}\Pr_\infty(k<\T\le k+m)$ or $\sup_{0\le k}\Pr_\infty(k<\T\le k+m|\T> k)$; see, e.g.,~\cite{Lai:IEEE-IT1998}. The constraint $\ARL(\T)\ge\gamma$, where $\gamma>1$, is then replaced with $\sup_{0\le k}\Pr_\infty(k<\T\le k+m)<\alpha$ or $\sup_{0\le k}\Pr_\infty(k<\T\le k+m|\T>k)<\alpha$, respectively, where $\alpha\in(0,1)$. As shown by~\cite{Lai:IEEE-IT1998}, either of these new constraints is stronger than the original $\ARL(\T)\ge\gamma$, assuming a properly picked $m\ge1$. Following~\cite{Tartakovsky:IEEE-CDC05,Tartakovsky:SA08-discussion}, let us focus on the conditional PFA and denote $\PFA_k^m(\T)\triangleq\Pr_\infty(k<\T\le k+m|\T>k)$. Our numerical method can be used to evaluate this probability for the SR--$r$ procedure using
\begin{align*}
\PFA_k^m(\mathcal{S}_A^x)
&=
1-\frac{\rho_{k+m}(x,A)}{\rho_k(x,A)},
\end{align*}
where $\rho_0(x,A)\equiv1$ for all $x,A\in\mathbb{R}$, and $\rho_k(x,A)$, for $k=1,2\ldots$ are given by~\eqref{eq:recursion2}.

We now show how to generalize the method to other procedures. Consider a generic detection procedure described by the stopping time $\mathcal{T}_A^s=\inf\{n\ge1\colon V_n^s\ge A\}$
where $A>0$ is a detection threshold, and $\{V_n^s\}_{n\ge0}$ is a generic Markovian detection statistic that admits the recursion $V_{n}^s=\psi(V_{n-1}^s)\LR_{n}$ for $n=1,2,\ldots$ with $V_0^s=s$, where $\psi(x)$ is a non-negative-valued function, and the headstart $s$ is fixed. This generic stopping time was considered by~\cite{Moustakides+etal:SS11} and it describes a rather broad class of detection procedures; in particular, the GSR procedure given by~\eqref{eq:T-SRr-def} and~\eqref{eq:statistic-SRr-def} is a member of this class with $\psi(x)=1+x$. Also, for the CUSUM chart (not considered here) it is enough to take $\psi(x)=\max\{1,x\}$. This universality of $\mathcal{T}_A^s$ enables one to evaluate practically any procedure that is a special case of $\mathcal{T}_A^s$, merely by picking the right $\psi(x)$.

The integral equations framework (including the numerical method) developed above can be easily extended to $\mathcal{T}_A^s$. It suffices to let
\begin{align*}
\mathcal{K}_d(x,y)
&=
\dfrac{\partial}{\partial y}\Pr_d(V_{n+1}^s\le y|V_n^s=x)
=
\dfrac{\partial}{\partial y}P_d^{\LR}\left(\frac{y}{\psi(x)}\right),
\; d=\{0,\infty\}
\end{align*}
to denote the transition probability density kernel for the Markov process $\{V_n^s\}_{n\ge1}$. Note that $dP_0^{\LR}(t)=tdP_{\infty}^{\LR}(t)$ is still true, and therefore, so is $\psi(x)K_0(x,y)=yK_{\infty}(x,y)$. This is a generalization of the $(1+x)K_0=yK_\infty(x,y)$ identity used earlier for the GSR procedure.

However, in general, while the change-of-measure trick can be used to develop numerical methods to evaluate other detection procedures, the improved convergence rate will not take effect as fast as for the GSR procedure. This is because of the GSR procedure's martingale property that the numerical method exploits.

\section{Conclusion}
\label{sec:conclusion}

We proposed a numerical method to compute the Generalized Shiryaev--Roberts (GSR) procedure's pre-change Run-Length distribution and its moments. The GSR procedure is an umbrella term for the original SR procedure and its extension -- the SR--$r$ procedure. The proposed method is based on the integral-equations approach, and uses the collocation framework with the basis functions selected so as to exploit a certain change-of-measure gimmick and a specific martingale property of the GSR procedure's detection statistic. This resulted in substantial improvement of the method's accuracy and overall stability. We also carried out a complete accuracy analysis of the method, and provided a tight upper bound on the method's error as well as showed the theoretical rate of convergence to be quadratic. By testing the method in a specific change-point scenario we confirmed that the accuracy is high for a wide range of changes (faint, moderate, contrast) and for a large range of values of the ARL to false alarm. We also confirmed that the method's accuracy remains high even if the partition is rough.

\section*{Acknowledgements}

The authors are thankful to Shelemyahu~Zacks (SUNY Binghamton), Shyamal~K.~De (SUNY Binghamton), Sven~Knoth (Helmut Schimdt University, Hamburg, Germany), and to George~V.~Moustakides (University of Patras, Greece) for reading this work's preliminary draft and for providing valuable feedback that helped improve the paper. The authors are also grateful to the Editor-in-Chief, Nitis~Mukhopadhyay (University of Connecticut-Storrs), and to the anonymous reviewers whose comments helped to improve the manuscript further.


\end{document}